\def\eps{\varepsilon}
\def\qed{\hfill$\Box$}
\def\fxu{\lfloor x_1 \rfloor}
\def\fxd{\lfloor x_2 \rfloor}
\def\lu{\lambda_1}
\def\ld{\lambda_2}
\def\pis{\tilde{\pi}}
\newcommand{\iinterv}[1]{\left\llbracket#1\right\rrbracket}
\newcommand{\iintervo}[1]{\left\llbracket#1\right)}
\newcommand{\assign}{:=}
\newcommand {\aplt} {\ {\raise-.5ex\hbox{$\buildrel<\over\sim$}}\ }
\def\E{{\mathcal E}}
\def\Z{{\mathbb Z}}
\def\R{{\mathbb R}}
\newcommand{\card}[1]{\left|#1\right|}
\newcounter{algorithmeligne}
\newdimen\framewidth
\def\myframe#1{%
  \framewidth=\textwidth
  \advance\framewidth by -2pt   
  \advance\framewidth by -\rightmargin
  \advance\framewidth by -\leftmargin
  \fbox{\begin{minipage}{\framewidth}#1\end{minipage}}%
}
\def\algorithme#1
        \newcommand{\ITEM}{
                \stepcounter{algorithmeligne}
                {\footnotesize \arabic{algorithmeligne}.}
                }
        \newcommand{\lign}{\ITEM}
\def\mR{{\mathbb R}}  
\def\mZ{{\mathbb Z}}
\newcommand{\ps}[2]{\langle#1,#2\rangle}
\def\eps{\epsilon}
\def\assign{{:=}}
\def\Z{{\mathbb Z}}
\begin{document}
\makeRR

\section{Introduction}
A lattice~$L$ is a discrete subgroup of some~$\mR^n$. Such an object
can always be represented as the set of integer linear combinations of
no more than~$n$ vectors~$\vec{b}_1, \ldots, \vec{b}_d$. If these
vectors are linearly independent, we say that they are a basis of the
lattice~$L$. The most famous algorithmic problem associated with
lattices is the so-called Shortest Vector Problem (SVP). Its
computational variant is to find a non-zero lattice vector of smallest
Euclidean length~---~this length being the minimum~$\lambda(L)$ of the
lattice~---~given a basis of the lattice. Its decisional variant is
known to be NP-hard under randomised reductions~\cite{Ajtai98}, even
if one only asks for a vector whose length is no more than~$2^{(\log
d)^{1-\eps}}$ times the length of a shortest vector~\cite{HaRe06} (for
any~$\eps>0$).

\medskip

SVP is of prime importance in cryptography since a now quite large
family of public-key cryptosystems rely more or less on it. The
Ajtai-Dwork cryptosystem~\cite{AjDw97} relies on $d^c$-SVP for
some~$c>0$, where~$f(d)$-SVP is the problem of finding the shortest
non-zero vector in the lattice~$L$, knowing that it is unique in the
sense that any vector that is of
length less than~$f(d) \cdot \lambda(L)$ is parallel to it.  The GGH
cryptosystem~\cite{GoGoHa97} relies on special instances of the Closest
Vector Problem (CVP), a non-homogeneous version of SVP. 
Finally, one strongly suspects that in
NTRU~\cite{HoPiSi98}~--~the only realistic lattice-based cryptosystem
nowadays, the private key can be read on the coordinates
of a shortest vector of the Coppersmith-Shamir lattice~\cite{CoSh97}.
The best known generic attacks on these encryption schemes are based
on solving SVP. It is therefore highly
important to know precisely what complexity is achievable, both in
theory and practice, in particular to select meaningful key-sizes.

\medskip

In practice, when one wants to obtain good approximations of the
lattice minimum, one uses Schnorr's block-based
algorithms~\cite{Schnorr87,ScEu94}.  These algorithms use internally
either Kannan's algorithm, or the lattice point enumeration procedure
on which it relies.  This is by far the most time-consuming part of
these algorithms.  In fact, the corresponding routine in Shoup's
NTL~\cite{NTL} relies on a much slower algorithm described
in~\cite{ScEu94} ($2^{O(d^2)}$ instead of~$d^{O(d)}$). The problem is
that the enumeration is performed on a basis which is not sufficiently
pre-processed (only LLL-reduced).  It works well in low dimension, but
it can be checked that it is sub-optimal even in moderate dimensions
(say~40): the efficiency gap between enumerating from
an LLL-reduced basis and from an HKZ-reduced basis shows that there is much
room for improving the strategy of~\cite{ScEu94} by pre-processing
the basis before starting the enumeration.

\medskip

Two main algorithms are known for solving SVP.  The first one, which
is deterministic, is based on the exhaustive enumeration of lattice
points within a small convex set. It is known as Fincke-Pohst's
enumeration algorithm~\cite{FiPo83} in the algorithmic number theory
community. In the cryptography community, it is known as Kannan's
algorithm~\cite{Kannan83}, which is quite similar to the one of Fincke
and Pohst. There are two main differences between both: firstly, in
Kannan's algorithm, a long pre-computation on the basis is performed
before starting the enumeration process; secondly, Kannan enumerates
points in a hyper-parallelepiped whereas Fincke and Pohst do it in an
hyper-ellipsoid contained in Kannan's hyper-parallelepiped~--~though it
may be that Kannan chose the hyper-parallelepiped in order to
simplify the complexity analysis.  Kannan obtained a~$d^{d+o(d)}$ complexity
bound (in all the complexity bounds mentioned in the introduction,
there is an implicit multiplicative factor that is polynomial in the
bit-size of the input). In 1985, Helfrich~\cite{Helfrich85} refined
Kannan's analysis, and obtained a~$d^{d/2+o(d)}$ complexity bound. On
the other hand, Ajtai, Kumar and Sivakumar~\cite{AjKuSi01} described a
probabilistic algorithm of complexity~$2^{O(d)}$. The best exponent
constant is likely to be small.  Nevertheless, unless a breakthrough
modification is introduced, this algorithm is bound to remain
impractical even in moderate dimension since it also requires an
exponential space (at least~$2^d$ in dimension~$d$).  
On the contrary, the deterministic algorithm of
Kannan requires a polynomial space.

\medskip

Our main result is to lower Helfrich's complexity bound on Kannan's
algorithm, from~$d^{\frac{d}{2}+o(d)} \approx d^{0.5 \cdot d}$
to~$d^{\frac{d}{2e}+o(d)} \approx d^{0.184 \cdot d+o(d)}$. This may
explain why Kannan's algorithm is tractable even in moderate
dimensions (higher than~$40$). Our
analysis can also be adapted to Kannan's algorithm that solves the
Closest Vector Problem: it decreases Helfrich's complexity
bound from~$d^{d+o(d)}$ to~$d^{d/2+o(d)}$. The complexity improvement
on Kannan's SVP algorithm directly provides better worst-case 
efficiency/quality trade-offs in Schnorr's block-based
algorithms~\cite{Schnorr87,ScEu94,GaHoKoNg06}.

It must be noted that if one
follows our analysis step by step, the derived~$o(d)$ may be large
when evaluated for some practical~$d$: the constants hidden in the
``$o(d)$'' are improvable (for some of them it may be easy, for
others it is probably much harder). No effort was made to improve
them, and we believe that it would have complicated the
proof with irrelevant details. In fact, most of our analysis consists
of estimating the number of lattice points within convex bodies, and showing
that the approximation by the volume is valid. By replacing this 
discretization by heuristic volume estimates, one obtains very small
heuristic hidden constants.

\medskip

Our complexity improvement is based on a fairly simple idea.  It is
equivalent to generate all lattice points within a ball and to
generate all integer points within an ellipsoid (consider the
ellipsoid defined by the quadratic form naturally associated with the
given lattice basis). Fincke and Pohst noticed that it was more
efficient to work with the ellipsoid than to consider a parallelepiped
containing it: indeed, when the dimension increases, the ratio of the two
volumes shrinks to~$0$ very quickly.
Amazingly, in his analysis, instead of considering the ellipsoid, 
Kannan bounds the volume of the parallelepiped. Using rather
involved technicalities, we bound the volume of the ellipsoid
(in fact, the number of integer points within it).
Some parts of our proof could be of independent interest. For example,
we show that for any Hermite-Korkine-Zolotarev-reduced (HKZ-reduced for short)
lattice basis~$(\vec{b}_1, \ldots, \vec{b}_d)$, and any subset~$I$
of~$\{1,\ldots,d\}$, we have:
$$ \frac{\|\vec{b}_1\|^{\card{I}}}{\prod_{i \in I} \|\vec{b}_i^*\|} \leq 
\sqrt{d}^{\card{I} \left(1 + \log \frac{d}{\card{I}}\right)},$$ 
where~$(\vec{b}_i^*)_{i \leq d}$ is the Gram-Schmidt orthogonalisation
of the basis~$(\vec{b}_1, \ldots, \vec{b}_d)$. This inequality generalises 
the results of~\cite{Schnorr87} on the quality of HKZ-reduced bases.

\medskip
\noindent {\sc Road-Map of the Paper.} In Section~\ref{se:background},
we recall some basic definitions and properties on lattice reduction.
Section~\ref{se:kannan} is devoted to the description of Kannan's
algorithm and Section~\ref{se:ana} to its complexity analysis.  In
Section~\ref{se:cvp}, we give without much detail our sibling result
on CVP, as well as very direct consequences of our result for
Schnorr's block-based algorithms. 

\medskip
\noindent {\sc Notation.} All logarithms are natural logarithms, i.e.,
$\log(e) = 1$.  Let~$\| \cdot \|$ and~$\ps{ \cdot }{ \cdot }$ be the
Euclidean norm and inner product of~$\mR^n$. Bold variables are
vectors. We use the bit complexity model. The 
notation~$\mathcal{P}(n_1,\ldots, n_i)$ 
means~$(n_1 \cdot \ldots \cdot n_i)^c$ 
for some constant~$c>0$. If~$x$ is real, we denote 
by~$\lfloor x \rceil$ a closest integer to it (with any convention for
making it unique) and we define the centred fractional part~$\{x\}$ 
as~$x-\lfloor x \rceil$. We use the notation~$\mbox{frac}(x)$ to denote 
the classical fractional part of~$x$, i.e., 
the quantity~$x - \lfloor x \rfloor$. 
Finally, for any integers~$a$ and~$b$, we
define~$\iinterv{a,b}$ as~$[a,b] \cap \mZ$.

\section{Background on Lattice Reduction}
\label{se:background}
We assume the reader is familiar with the geometry of numbers and 
its algorithmic aspects.
Complete introductions to Euclidean lattices algorithmic problems can be found 
in~\cite{MiGo02} and~\cite{Regev04}.

\medskip

\noindent {\bf Gram-Schmidt orthogonalisation.}
Let~$\vec{b}_1,\ldots,\vec{b}_d$ be linearly independent vectors.
Their {\it Gram-Schmidt orthogonalisation}
(GSO)~$\vec{b}_1^*,\ldots,\vec{b}_d^*$ is the orthogonal family
defined recursively as follows: the vector~$\vec{b}_i^*$ is the
component of the vector~$\vec{b}_i$ which is orthogonal to the linear
span of the vectors~$\vec{b}_1,\dots,\vec{b}_{i-1}$.  We
have~$\vec{b}_i^*=\vec{b}_i - \sum_{j=1}^{i-1} \mu_{i,j}\vec{b}_j^* $
where~$\mu_{i,j} = \frac{\ps{\vec{b}_i }{\vec{b}_j^*
}}{\left\|\vec{b}_j^*\right\|^2}$.  For~$i \leq d$ we
let~$\mu_{i,i}=1$.  
Notice that the GSO family depends on the order of the vectors.  If
the~$\vec{b}_i$'s are integer vectors, the~$\vec{b}_i^*$'s and
the~$\mu_{i,j}$'s are rational.

\medskip
\noindent {\bf Lattice volume.}  The volume of a lattice~$L$ is
defined as~$\det(L) = \prod_{i=1}^d \left\|\vec{b}_i^*\right\|$, where
the~$\vec{b}_i$'s are any basis of~$L$. It does not depend on the
choice of the basis of~$L$ and can be interpreted as the
geometric volume of the parallelepiped naturally spanned by 
the basis vectors.

\medskip
\noindent {\bf Minimum and SVP.}  Another important lattice invariant is
the minimum.  The {\it minimum}~$\lambda(L)$ is the radius
of the smallest closed ball centred at the origin containing at least
one non-zero lattice vector. The most famous lattice problem is the
{\it shortest vector problem}. We give here its computational
variant: given a basis of a lattice~$L$, find a lattice vector whose
norm is exactly~$\lambda(L)$.

\medskip
\noindent {\bf CVP.} We give here the computational variant of the
{\em closest vector problem}: given a basis of a lattice~$L$ and a target
vector in the real span of~$L$, find a closest vector of~$L$ 
to the target vector.

\medskip

The volume and the minimum of a lattice cannot behave independently. 
Hermite~\cite{Hermite50} was the first to bound the 
ratio~$\frac{\lambda(L)}{(\det L)^{1/d}}$ as a function of the dimension 
only, but his bound was later on greatly improved by Minkowski in
his {\em Geometrie der Zahlen}~\cite{Minkowski96}.
{\em Hermite's constant}~$\gamma_d$ is defined as the supremum
over~$d$ dimensional lattices~$L$ of the
ratio~$\frac{\lambda(L)^2}{(\det L)^{2/d}}$.  In particular, we
have~$\gamma_d \leq \frac{d+4}{4}$ (see~\cite{Martinet02}), which we
will refer to as~{\em Minkowski's theorem}.  Unfortunately, the proof
of Minkowski's theorem is not constructive. In practice, one often
starts with a lattice basis, and tries to improve its quality. This
process is called lattice reduction. The most usual ones are probably
the LLL and HKZ reductions. Before defining them, we need
the concept of size-reduction.

\medskip
\noindent {\bf Size-reduction.}  A basis~$(\vec{b}_1,\dots,\vec{b}_d)$
is {\em size-reduced} if its GSO family satisfies~$|\mu_{i,j}| \le
1/2$ for all~$1 \le j < i \le d$.  \

\medskip
\noindent {\bf HKZ-reduction.}  A basis~$(\vec{b}_1,\ldots,\vec{b}_d)$
is said to be {\em Hermite-Korkine-Zolotarev-reduced} if it is
size-reduced, the vector~$\vec{b}_1$ reaches the first lattice
minimum, and the projections of the $(\vec{b}_i)_{i\geq 2}$'s
orthogonally to the vector~$\vec{b}_1$ are an HKZ-reduced basis.
The following immediately follows from this definition and
Minkowski's theorem. It is the sole property on HKZ-reduced bases that
we will use:
\begin{lemma}
\label{le:hkz}
If~$(\vec{b}_1,\ldots,\vec{b}_d)$ is HKZ-reduced, then for any~$i\leq d$,
we have:
$$\|\vec{b}_i^*\|\leq 
\sqrt{\frac{d-i+5}{4}} \cdot 
\left( \prod_{j\geq i} \|\vec{b}_j^*\|\right)^{\frac{1}{d-i+1}}.$$
\end{lemma}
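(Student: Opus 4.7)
The plan is to apply Minkowski's theorem to the lattice obtained by projecting $L$ orthogonally to $\mathrm{span}(\vec{b}_1,\ldots,\vec{b}_{i-1})$. Denote by $\pi_i$ this orthogonal projection and let $L_i \assign \pi_i(L)$. The projected family $(\pi_i(\vec{b}_i),\ldots,\pi_i(\vec{b}_d)) = (\vec{b}_i^*, \pi_i(\vec{b}_{i+1}), \ldots, \pi_i(\vec{b}_d))$ is a basis of $L_i$, so $L_i$ has dimension $d-i+1$ and determinant
$$\det L_i = \prod_{j \geq i} \|\vec{b}_j^*\|,$$
since the Gram-Schmidt vectors of the projected basis are exactly $\vec{b}_i^*, \ldots, \vec{b}_d^*$.

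The next step is to show $\|\vec{b}_i^*\| = \lambda(L_i)$. By induction on the recursive definition of HKZ-reduction, the basis $(\pi_i(\vec{b}_i),\ldots,\pi_i(\vec{b}_d))$ is itself HKZ-reduced. By the base case of that definition (applied to this projected basis), its first vector $\pi_i(\vec{b}_i) = \vec{b}_i^*$ reaches the first minimum of $L_i$, hence $\lambda(L_i) = \|\vec{b}_i^*\|$.

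Now apply Minkowski's theorem to $L_i$: since $\gamma_{d-i+1} \leq \frac{(d-i+1)+4}{4} = \frac{d-i+5}{4}$, we obtain
$$\|\vec{b}_i^*\|^2 = \lambda(L_i)^2 \leq \gamma_{d-i+1} \cdot (\det L_i)^{2/(d-i+1)} \leq \frac{d-i+5}{4} \cdot \left(\prod_{j \geq i}\|\vec{b}_j^*\|\right)^{2/(d-i+1)},$$
and taking square roots yields the claim.

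There is no real obstacle here: everything is set up so that the HKZ definition precisely guarantees that $\vec{b}_i^*$ is a shortest non-zero vector of the projected lattice $L_i$, and then the inequality is just Minkowski's bound applied to $L_i$. The only point requiring care is the identification of the Gram-Schmidt vectors and determinant of the projected basis, which is immediate from the definition of $\pi_i$.
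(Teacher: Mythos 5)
Your proof is correct and is exactly the argument the paper has in mind: the paper introduces the lemma with ``The following immediately follows from this definition and Minkowski's theorem'' and leaves the details to the reader, and those details are precisely your three steps (the projected basis is HKZ-reduced by the recursive definition, so $\vec{b}_i^*$ attains $\lambda(L_i)$; $\det L_i = \prod_{j\ge i}\|\vec{b}_j^*\|$; apply $\gamma_{d-i+1}\le \frac{d-i+5}{4}$). Nothing to add.
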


HKZ-reduction is very strong, but very expensive to compute. 
On the contrary, LLL-reduction is fairly cheap, but an LLL-reduced basis
is of much lower quality. 

\medskip
\noindent {\bf LLL-reduction~\cite{LeLeLo82}.}
A basis~$(\vec{b}_1,\dots,\vec{b}_d)$ is 
{\em LLL-reduced} if it is size-reduced 
and if its GSO satisfies the~$(d-1)$~Lov\'asz conditions: 
$\frac{3}{4} \cdot \left\| \vec{b}_{\kappa-1}^*
\right\|^2 \le \left\| \vec{b}_{\kappa}^* + \mu_{\kappa,\kappa-1}
\vec{b}_{\kappa-1}^* \right\|^2$.  
The LLL-reduction implies that the norms $\|\vec{b}_{1}^*\|, \dots,
\|\vec{b}_{d}^*\|$ of the GSO vectors never drop too fast:
intuitively, the vectors are not far from being orthogonal. Such bases
have useful properties, like providing exponential approximations to
SVP and CVP. In particular, their first
vector is relatively short.  More precisely:

\begin{theorem}[\cite{LeLeLo82}]
Let~$(\vec{b}_1,\ldots,\vec{b}_d)$ be an LLL-reduced basis 
of a lattice~$L$.
Then we have~$\|\vec{b}_1\| \leq 
2^{\frac{d-1}{4}} \cdot (\det L)^{1/d}$.
Moreover, there exists an algorithm that takes as input any set of integer 
vectors and outputs in deterministic polynomial
time an LLL-reduced basis of the lattice they span.
\end{theorem}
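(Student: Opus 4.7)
The plan is to split the proof into two parts: the norm bound on $\vec{b}_1$, which follows from a short calculation, and the algorithmic claim, which requires describing LLL and analysing its cost.

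For the norm bound, I would combine the Lov\'asz conditions with size-reduction to control the ratio $\|\vec{b}_i^*\| / \|\vec{b}_{i-1}^*\|$. Since $\vec{b}_\kappa^*$ is orthogonal to $\vec{b}_{\kappa-1}^*$, Pythagoras gives
$$\left\| \vec{b}_\kappa^* + \mu_{\kappa,\kappa-1}\vec{b}_{\kappa-1}^*\right\|^2 = \|\vec{b}_\kappa^*\|^2 + \mu_{\kappa,\kappa-1}^2\|\vec{b}_{\kappa-1}^*\|^2,$$
and size-reduction yields $\mu_{\kappa,\kappa-1}^2 \leq 1/4$. Plugging into the Lov\'asz condition, one gets $\|\vec{b}_\kappa^*\|^2 \geq \tfrac{1}{2}\|\vec{b}_{\kappa-1}^*\|^2$, and iterating gives $\|\vec{b}_i^*\|^2 \geq 2^{1-i}\|\vec{b}_1\|^2$, using $\vec{b}_1^* = \vec{b}_1$. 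Since $\det L = \prod_{i=1}^d \|\vec{b}_i^*\|$, this yields $\det L \geq 2^{-d(d-1)/4}\|\vec{b}_1\|^d$, which rearranges to the desired inequality.

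For the algorithmic part I would describe the standard LLL loop: process indices $\kappa = 2, \ldots, d$ and, at each step, first size-reduce $\vec{b}_\kappa$ against the previous vectors (using the GSO to subtract appropriate integer multiples so that every $|\mu_{\kappa,j}| \leq 1/2$), then test the Lov\'asz condition at index $\kappa$; if it fails, swap $\vec{b}_{\kappa-1}$ and $\vec{b}_\kappa$ and decrement $\kappa$. To bound the number of iterations, I would use the potential $D = \prod_{i=1}^{d-1}\det(L_i)^2$, where $L_i$ is the sublattice generated by $\vec{b}_1, \ldots, \vec{b}_i$; equivalently $D = \prod_{i=1}^{d-1}\prod_{j=1}^i \|\vec{b}_j^*\|^2$. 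Size-reductions preserve $D$, while each swap multiplies it by a factor at most $3/4$ (only the factor $\det(L_{\kappa-1})^2$ changes, and it is multiplied by $\|\vec{b}_\kappa^* + \mu_{\kappa,\kappa-1}\vec{b}_{\kappa-1}^*\|^2/\|\vec{b}_{\kappa-1}^*\|^2 \leq 3/4$). Since $D$ is a positive integer bounded initially by a polynomial in the input bit-size, the number of swaps is polynomial.

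The main obstacle, and the genuinely delicate part, is bounding the bit-sizes of the rationals manipulated along the way — the GSO coefficients $\mu_{i,j}$, the orthogonalised vectors $\vec{b}_i^*$, and the updated basis vectors $\vec{b}_i$ themselves, all of which must stay of polynomial bit-size to get polynomial overall complexity. I would handle this by showing that all denominators arising in the GSO divide some $\det(L_i)^2$, hence are of polynomial bit-size in the input, and that size-reduction keeps the $|\mu_{i,j}|$ bounded by $1/2$, which in turn keeps the coordinates of the $\vec{b}_i$'s controlled. Combined with the polynomial bound on swaps and the polynomial cost of each linear-algebra step, this gives the announced deterministic polynomial time guarantee.
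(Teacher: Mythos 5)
Your proof is correct and is essentially the standard argument from the cited reference~\cite{LeLeLo82}: the norm bound via the chain $\|\vec{b}_\kappa^*\|^2 \geq \tfrac{1}{2}\|\vec{b}_{\kappa-1}^*\|^2$ from Lov\'asz + size-reduction, and the polynomial-time claim via the integer potential $D=\prod_i \det(L_i)^2$ with the classical bit-size control through the $d_i$'s. The paper states this theorem without proof, attributing it to~\cite{LeLeLo82}, so there is nothing to compare beyond noting that your sketch is the standard one. (One small wording slip: it is $\log D$, not $D$ itself, that is initially bounded by a polynomial in the input bit-size.)
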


In the following, we will also need the fact that if the set of
vectors given as input to the LLL algorithm starts with a shortest
non-zero lattice vector, then this vector is not changed during the
execution of the algorithm: the output basis starts with the same
vector.

\section{Kannan's SVP Algorithm}
\label{se:kannan}

Kannan's SVP algorithm~\cite{Kannan83} relies on multiple calls to the
so-called short lattice points enumeration procedure. 
The latter aims at computing all vectors of a given
lattice that are in the hyper-sphere centred in~$\vec{0}$ and some
prescribed radius. Variants of the enumeration procedure are described 
in~\cite{AgErVaZe02}.

\subsection{Short Lattice Points Enumeration}

Let $(\vec{b}_1,\ldots,\vec{b}_d)$ be a basis of a lattice~$L \subset
\mZ^n$ and let~$A \in \mZ$.  Our goal is to find all lattice
vectors~$\sum_{i=1}^d x_i \vec{b}_i$ of squared Euclidean 
norm~$\leq A$.
The enumeration works as follows.  Suppose
that~$\left\|\sum_i x_i \vec{b}_i\right\|^2 \leq A$ for some
integers~$x_i$'s. Then, by considering the components of the
vector~$\sum_i x_i \vec{b}_i$ on each of the~$\vec{b}_i^*$'s, we
obtain:
\begin{eqnarray*}
\left(x_d\right)^2 \cdot \|\vec{b}_d^*\|^2 & \leq & A, \\
\left(x_{d-1} + \mu_{d,d-1} x_d\right)^2 \cdot \|\vec{b}_{d-1}^*\|^2 &
\leq & A - \left(x_d\right)^2 \cdot \|\vec{b}_d^*\|^2, \\ 
& \ldots & \\ 
\left(x_i + \sum_{j=i+1}^d \mu_{j,i} x_j\right)^2 \cdot
\|\vec{b}_i^*\|^2 & \leq & A - \sum_{j=i+1}^d l_j, \\ 
& \ldots & \\ 
\left(x_1 + \sum_{j=2}^d \mu_{j,i} x_j\right)^2 \cdot
\|\vec{b}_1\|^2 & \leq & A - \sum_{j=2}^d l_j,
\end{eqnarray*}
where~$l_i = (x_i+\sum_{j>i} x_j \mu_{j, i} )^2 \cdot \|\vec{b}_i^*\|^2$.
The algorithm of Figure~\ref{fig:Enum} mimics the equations above.
It is easy to see that the bit-cost of this algorithm is
bounded by the number of loop iterations times a polynomial in the bit-size 
of the input.  We will prove that
if the input basis~$(\vec{b}_1,\ldots,\vec{b}_d)$ is sufficiently
reduced and if~$A = \|\vec{b}_1\|^2$, then the number of loop
iterations is~$d^{\frac{d}{2e}+o(d)}$.

\begin{figure}[htbp] 
\algorithme{
{\bf Input: } An integral lattice basis~$(\vec{b}_1,\ldots,\vec{b}_d)$, 
a bound~$A \in \mZ$. \\
{\bf Output: } All vectors in~$L(\vec{b}_1,\ldots,\vec{b}_d)$ that are
of squared norm~$\leq A$.\\
\lign Compute the rational~$\mu_{i,j}$'s and~$\|\vec{b}_i^*\|^2$'s. \\
\lign $\vec{x} \assign \vec{0}, \vec{l} \assign \vec{0}, 
S \assign \emptyset$.
\\
\lign $i \assign 1$. While~$i \leq d$, do \\
\lign \ \ \ $l_i \assign (x_i+\sum_{j>i}x_j\mu_{j,i})^2 \|\vec{b}_i^*\|^2$. \\
\lign \ \ \ If $i=1$ and $\sum_{j=1}^d l_j \leq A$, then
$S \assign S \cup \{\vec{x}\}$, $x_1\assign x_1+1$. \\
\lign \ \ \ If $i \neq 1$ and $\sum_{j \geq i} l_j \leq A$, then\\
\lign \ \ \ \ \ \ $i\assign i-1$,
$x_i \assign \left\lceil -\sum_{j>i}(x_j \mu_{j,i}) 
-\sqrt{\frac{A-\sum_{j>i} l_j}{\|\vec{b}_i^*\|^2}} \right\rceil$.\\
\lign \ \ \ If $\sum_{j\geq i} l_j > A$, then 
$i \assign i+1$, $x_i \assign x_i+1$.\\
\lign Return~$S$.
}
\vspace*{-.2cm}
\caption{The Enumeration Algorithm.}
\label{fig:Enum} 
\vspace*{-.5cm} 
\end{figure}

\subsection{Solving SVP}

To solve SVP, Kannan provides an algorithm that computes HKZ-reduced
bases, see Figure~\ref{fig:Kannan}.  The cost of the enumeration
procedure dominates the overall cost and mostly depends on the quality
(i.e., the slow decrease of the~$\|\vec{b}_i^*\|$'s) of the input
basis. The main idea of Kannan's algorithm is thus to spend a lot of
time pre-computing a basis of excellent quality before calling the
enumeration procedure. More precisely, it pre-computes a basis which satisfies
the following definition:

\begin{definition}[Quasi-HKZ-Reduction]
A basis~$(\vec{b}_1,\ldots, \vec{b}_d)$ is quasi-HKZ-reduced if
it is size-red\-uced, if~$\|\vec{b}_2^*\| \geq \|\vec{b}_1^*\|/2$ and if
once projected orthogonally to~$\vec{b}_1$, the other~$\vec{b}_i$'s
are HKZ-reduced.
\end{definition}

\begin{figure}[htbp]
\vspace*{-.6cm} 
\algorithme{ 
{\bf Input: } An integer lattice basis~$(\vec{b}_1,\ldots,\vec{b}_d)$. \\ 
{\bf Output: } An HKZ-reduced basis of the same lattice. \\ 
\lign \ LLL-reduce the basis~$(\vec{b}_1,\ldots,\vec{b}_d)$.  \\
\lign \ Do \\
\lign \ \ \ \ Compute the
projections~$(\vec{b}_i')_{i \geq 2}$ of the~$\vec{b}_i$'s
orthogonally to~$\vec{b}_1$. \\ 
\lign \ \ \ \ HKZ-reduce
the~$(d-1)$-dimensional basis~$(\vec{b}_2',\ldots,\vec{b}_d')$. \\
\lign \ \ \ \ Extend the obtained~$(\vec{b}_i')_{i \geq 2}$'s into vectors
of~$L$ by adding to them rational \\
multiples of~$\vec{b}_1$, in such a
way that we have~$|\mu_{i,1}|\leq 1/2$ for any~$i>1$. \\ 
\lign \ While~$(\vec{b}_1,\ldots,\vec{b}_d)$ is not quasi-HKZ-reduced.\\ 
\lign \ Call the enumeration procedure to find all lattice vectors of 
length~$\leq \|\vec{b}_1\|$. \\
Let~$\vec{b}_0$ be a shortest non-zero vector among them.\\ 
\lign \ $(\vec{b}_1,\ldots,\vec{b}_d)
\assign \mbox{LLL} (\vec{b}_0,\ldots,\vec{b}_d)$.\\
\lign \ Compute the
projections~$(\vec{b}_i')_{i \geq 2}$'s of the~$\vec{b}_i$'s
orthogonally to the vector~$\vec{b}_1$. \\
\lign HKZ-reduce
the~$(d-1)$-dimensional basis~$(\vec{b}_2',\ldots,\vec{b}_d')$. \\
\lign Extend the obtained~$(\vec{b}_i')_{i \geq 2}$'s into vectors
of~$L$ by adding to them rational \\
multiples of~$\vec{b}_1$, in such a
way that we have~$|\mu_{i,1}|\leq 1/2$ for any~$i>1$.
}
\vspace*{-.2cm}
\caption{Kannan's SVP Algorithm.}
\label{fig:Kannan}
\vspace*{-.5cm}
\end{figure}

Several comments need to be made on the algorithm of
Figure~\ref{fig:Kannan}.  Steps~4 and~10 are recursive calls. Nevertheless,
one should be careful because the~$\vec{b}_i'$'s are rational vectors,
whereas the input of the algorithm must be integral. One must
therefore scale the vectors by a common factor.
Steps~5 and~11 can be performed for example by
expressing the reduced basis vectors as integer linear combinations of
the initial ones, using these coefficients to recover lattice vectors
and subtracting a correct multiple of the vector~$\vec{b}_1$.
In Step~7, it is alway possible to choose such a vector~$\vec{b}_0$, since
this enumeration always provides non-zero solutions
(the vector~$\vec{b}_1$ is a one of them).

\subsection{Cost of Kannan's SVP Solver}

We recall briefly Helfrich's complexity analysis~\cite{Helfrich85} of
Kannan's algorithm 
and explain our complexity improvement. 
Let~$C(d, n, B)$ be the worst-case complexity of the algorithm of 
Figure~\ref{fig:Kannan} when given as input a~$d$-dimensional basis which 
is embedded in~$\mZ^n$ and whose
coefficients are smaller than~$B$ in absolute value.
Kannan~\cite{Kannan83} and Helfrich~\cite{Helfrich85} show the following
properties:
\begin{itemize}
\item It computes an HKZ-reduced basis of the lattice
spanned by the input vectors.
\item All arithmetic operations performed during the execution are of
cost~$\mathcal{P}(d, n, \log B)$. This implies that the cost~$C(d, n,
B)$ can be bounded by~$C(d) \cdot \mathcal{P}(\log B, n)$ for some
function~$C(d)$.
\item The number of iterations of the loop of Steps~2--6 
is bounded by~$O(1) + \log d$.
\item The cost of the call to the enumeration procedure at Step~7 
is bounded by~$\mathcal{P}(\log B, n) \cdot d^{d/2+o(d)}$.
\end{itemize}

{From} these properties and those of the LLL algorithm as recalled in 
the previous section, it is easy to obtain the following equation:
$$ C(d) \leq (O(1) + \log d) (C(d-1)+\mathcal{P}(d)) + \mathcal{P}(d) +
d^{\frac{d}{2}+o(d)}.$$  
One can then derive the bound~$C(d,B,n) \leq \mathcal{P}(\log B,n)
\cdot d^{\frac{d}{2}+o(d)}$.

\medskip
The main result of this paper is to improve this complexity upper bound
to~$\mathcal{P}(\log B, n) \cdot d^{\frac{d}{2e}+o(d)}$. In fact, we show 
the following:

\begin{theorem}
\label{th:main}
Given as inputs a quasi-HKZ-reduced 
basis~$(\vec{b}_1,\ldots,\vec{b}_d)$ and~$A=\|\vec{b}_1\|^2$, the number
of loop iterations during the execution of the enumeration algorithm
as described in Figure~\ref{fig:Enum} 
is bounded by~$\mathcal{P}(\log B) \cdot 2^{O(d)} \cdot d^{\frac{d}{2e}}$,
where~$B=\max_i \|\vec{b}_i\|$.
As a consequence, given a $d$-dimensional basis of $n$-dimensional vectors 
whose entries are integers with absolute values~$\leq B$, one can compute
an HKZ-reduced basis of the lattice they span 
in deterministic
time~$\mathcal{P}(\log B,n) \cdot d^{\frac{d}{2e}+o(d)}$.
\end{theorem}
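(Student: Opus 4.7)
My approach is to count enumeration nodes level by level and bound each count by a tight volume estimate for integer points in an ellipsoid, rather than Helfrich's enclosing parallelepiped. The ball volume contributes a factor $(2\pi e/m)^{m/2}$ which, after combining with the HKZ inequality of the introduction, turns Helfrich's $d^{d/2}$ into $d^{d/(2e)}$.

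Every loop iteration at level $k$ can be charged to a unique tuple $(x_k,\dots,x_d)\in\mZ^{d-k+1}$ obeying the partial ellipsoid constraint $\sum_{i=k}^{d}(x_i+\sum_{j>i}\mu_{j,i}x_j)^2\|\vec{b}_i^*\|^2\leq A$. Writing $\pi_k$ for orthogonal projection onto $\mathrm{span}(\vec{b}_k^*,\dots,\vec{b}_d^*)$, the number of such tuples $N_k$ equals $|\pi_k(L)\cap B(\vec{0},\|\vec{b}_1\|)|$, so the total iteration count is at most $\sum_{k=1}^d N_k$. For $m=d-k+1$ I would then establish a volume-type estimate of the shape $N_k \leq (1+o(1))\,V_m(\|\vec{b}_1\|)/\prod_{i=k}^d\|\vec{b}_i^*\|$ with $V_m(R)=\pi^{m/2}R^m/\Gamma(1+m/2)$. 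This is the main technical obstacle: the ellipsoid $E_k$ defined in the $x$-coordinates is not axis-aligned, so comparing $|E_k\cap\mZ^m|$ with $\vol(E_k)$ without spoiling the improvement on boundary terms requires a careful discretisation. A workable route is to bound $|E_k\cap\mZ^m|\leq \vol(E_k+[-1/2,1/2]^m)$ and exploit size-reducedness of the $\mu_{j,i}$'s (and possibly split indices according to whether $\|\vec{b}_i^*\|$ exceeds $\|\vec{b}_1\|$) to control the enlargement, absorbing any residual loss into the $2^{O(d)}$ and $\mathcal{P}(\log B)$ factors of the target bound.

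Granted this volume bound, I would invoke the introduction's HKZ inequality applied to the HKZ-reduced projected basis $(\vec{b}_2',\dots,\vec{b}_d')$ in dimension $d-1$: for $I=\{k,\dots,d\}\setminus\{1\}$, it gives $\|\vec{b}_2^*\|^{|I|}/\prod_{i\in I}\|\vec{b}_i^*\|\leq (d-1)^{|I|(1+\log((d-1)/|I|))/2}$. The quasi-HKZ condition $\|\vec{b}_1\|\leq 2\|\vec{b}_2^*\|$ contributes an extra $2^m$ factor, and for $k=1$ one peels off the trivial index $\|\vec{b}_1\|/\|\vec{b}_1^*\|=1$. Combining with Stirling's bound $V_m(R)\leq (2\pi eR^2/m)^{m/2}$ yields $N_k\leq 2^{O(d)}(2\pi e/m)^{m/2}d^{m(1+\log(d/m))/2}$. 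Writing $m=\alpha d$ and taking logarithms, the dominant term of $\log N_k$ is $(\alpha d\log(1/\alpha)/2)\log d$; elementary calculus maximises $\alpha\log(1/\alpha)$ at $\alpha=1/e$ with value $1/e$, so $N_k\leq 2^{O(d)}d^{d/(2e)}$. Summing over the $d$ values of $k$ costs only a factor $d$, absorbed into $d^{o(d)}$, and plugging into Helfrich's recurrence $C(d)\leq(\log d+O(1))(C(d-1)+\mathcal{P}(d))+d^{d/(2e)+o(d)}$ unrolls (using $(\log d)^d=d^{o(d)}$) to the claimed $\mathcal{P}(\log B,n)\cdot d^{d/(2e)+o(d)}$ running time for the full algorithm.
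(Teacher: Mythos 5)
Your high-level strategy is the right one — replace Helfrich's enclosing parallelepiped by the ellipsoid, whose much smaller volume ultimately yields the $1/e$ factor — and your final arithmetic (Stirling on $V_m$, the HKZ inequality with exponent $|I|(1+\log(d/|I|))/2$, optimising $\alpha\log(1/\alpha)$ at $\alpha=1/e$, then the recurrence) does come out to $d^{d/(2e)+o(d)}$. But there are two linked gaps, and the second is the one the paper spends most of Section~\ref{se:ana} closing.

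First, the volume estimate $N_k \leq (1+o(1))\,V_m(\|\vec{b}_1\|)/\prod_{i=k}^d\|\vec{b}_i^*\|$ is simply false in general: whenever some $\|\vec{b}_i^*\|$ exceeds the ball radius (which does happen for HKZ-reduced bases — only a lower bound on $\|\vec{b}_i^*\|/\|\vec{b}_1\|$ is available, not an upper bound), the ball is thinner than one lattice spacing in that direction, so the lattice-point count behaves like a lower-dimensional count while the volume-over-determinant ratio collapses; the discrepancy is unbounded. The Minkowski-sum device you suggest does repair this, but it necessarily introduces a factor of the shape $\prod_j\max(1,\,\sqrt{A}/(c\|\vec{b}_j^*\|))$ (the paper obtains exactly this, with the crucial extra $\sqrt d$ in the denominator, via the Mazo--Odlyzko theta-function bound leading to Equation~(\ref{eq:interval})). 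That is not ``a residual loss absorbed into $2^{O(d)}$'': it means the effective set of indices contributing to the product is the \emph{arbitrary} subset $I=\{j\geq k:\|\vec{b}_j^*\|<\sqrt{A}/(c\sqrt d)\}$ of the tail, not the tail interval $\{k,\ldots,d\}$ itself. Second, and consequently, you invoke the introduction's HKZ inequality only for tail intervals $I=\{k,\ldots,d\}$ and treat the general statement as given. For tail intervals the bound is essentially Schnorr's \cite[Lemma 4]{Schnorr87} and its averaged form Lemma~\ref{le:mschnorr}; the paper itself flags this as the ``typical case'' that a reader should use as a guide. The genuine difficulty of Theorem~\ref{th:main}, as the paper reduces it, is proving Theorem~\ref{th:main4HKZ} for \emph{arbitrary} $I\subset\iinterv{1,d}$, which requires the block decomposition of $I$, the induction of Lemma~\ref{le:recurs} with its fractional $\pis$ bookkeeping, the real-parameter Lemma~\ref{le:reel}, and the convexity Lemma~\ref{le:convexity}. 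Your proposal never engages with that part, so it proves only the special case the paper explicitly singles out as easy.
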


\section{Complexity of the Enumeration Procedure}
\label{se:ana}

This section is devoted to proving Theorem~\ref{th:main}.

\subsection{{From} the Enumeration Procedure to Integer Points in 
Hyper-ellipsoids}

In this subsection, we do not assume anything on the input 
basis~$(\vec{b}_1,\ldots, \vec{b}_d)$ and on the input bound~$A$.
Up to some polynomial in~$d$ and~$\log B$, the complexity of the enumeration
procedure of Figure~\ref{fig:Enum} is the number of loop iterations.
This number of iterations is itself bounded by:
$$\sum_{i=1}^d \left|\left\{\left(x_i,\ldots,x_d\right)\in\mZ^{d-i+1}, 
\|\sum_{j=i}^d x_j\vec{b}_j^{(i)}\|^2 \leq A\right\}\right|,$$
where~$\vec{b}_j^{(i)}= \vec{b}_j - \sum_{k<i} \mu_{j,k} \vec{b}_k^*$ is
the vector~$\vec{b}_j$ once projected orthogonally to the linear span
of the vectors~$\vec{b}_1,\ldots,\vec{b}_{i-1}$.
Indeed, the truncated coordinate~$(x_i, \ldots, x_d)$ is either a valid one,
i.e., we 
have~$|\sum_{j=i}^d x_j\vec{b}_j^{(i)}\|^2 \leq A$, or~$(x_i-1, \ldots, x_d)$
is a valid one, or~$(x_{i+1}, \ldots, x_d)$ is a valid one. In fact, 
if~$(x_i, \ldots, x_d)$ is a valid truncated coordinate, only two non-valid 
ones related to that one can possibly be considered during the execution
of the algorithm: $(x_i+1, \ldots, x_d)$ and~$(x_{i-1}, x_i\ldots, x_d)$ for
at most one integer~$x_{i-1}$.

Consider the 
quantity~$\left|\left\{\left(x_i,\ldots,x_d\right)\in\mZ^{d-i+1}, 
\|\sum_{j=i}^d x_j\vec{b}_j^{(i)}\|^2 \leq A\right\}\right|$. By applying
the change of variable~$x_j 
\leftarrow x_j - \left\lfloor\sum_{k>j} \mu_{k,j} x_k\right\rceil$,
we obtain:

\begin{eqnarray*}
\sum_{i\leq d} |\{(x_i,\ldots,x_d)\in\mZ^{d-i+1} & , &  
\|\sum_{j \geq i} x_j\vec{b}_j^{(i)}\|^2 \leq A \}| \\
& \leq & 
\sum_{i \leq d} |\{(x_i,\ldots,x_d)\in\mZ^{d-i+1},
\sum_{j \geq i} 
(x_j+\sum_{k>j} \mu_{k,j} x_k)^2 \cdot \|\vec{b}_j^*\|^2 \leq A\}| \\
& \leq &
\sum_{i \leq d} |\{(x_i,\ldots,x_d)\in\mZ^{d-i+1},
\sum_{j \geq i} (x_j+\{\sum_{k>j} \mu_{k,j} x_k\ \})^2 \cdot 
\|\vec{b}_j^*\|^2 \leq A\}|.
\end{eqnarray*}

If~$x$ is an integer and~$\eps\in [-1/2,1/2]$, then we have the relation
$(x+\eps)^2 \geq x^2/4$.
If~$x=0$, this is obvious, and otherwise we use the 
inequality~$|\eps|\leq 1/2\leq |x|/2$. 
As a consequence, up to a polynomial factor, the complexity of the enumeration
is bounded by:
$$
\sum_{i \leq d} \left|\left\{\left(x_i,\ldots,x_d\right)\in\
\mZ^{d-i+ 1}, 
\sum_{j \geq i} x_j^2 \cdot \|\vec{b}_j^*\|^2 \leq 4A\right\}\right|.
$$

For any~$i \leq d$, we define the 
ellipsoid~$\E_i = \left\{ (y_i,\ldots,y_d) \in \R^{d-i+1},  
\sum_{j \geq i} y_j^2 \cdot \|\vec{b}_j^*\|^2 \leq 4A \right\}$,
as well as the quantity~$N_i = |\E_i \cap \mZ^{d-i+1}|$. We want
to bound the sum of the~$N_i$'s. We now fix some index~$i$. 
The following sequence of relations is inspired 
from~\cite[Lemma 1]{MaOd90}.
\begin{eqnarray*}
N_i & = & \sum_{(x_i,\ldots,x_d) \in \mZ^{d-i+1}} 
{\bf 1}_{\E_i} (x_i,\ldots,x_d)
\leq \exp \left(d \left(1 - 
\sum_{j \geq i} x_j^2 \frac{\|\vec{b}_j^*\|^2}{4A} \right) \right) \\
& \leq & e^d \cdot \prod_{j \geq i} \sum_{x \in \Z}
\exp \left( -x^2 \frac{d \|\vec{b}_i^*\|^2}{4A} \right) 
= e^d \cdot \prod_{j \geq i}
\Theta \left(\frac{d \|\vec{b}_j^*\|^2}{4A}\right),
\end{eqnarray*}
where $\Theta(t) = \sum_{x \in \Z} \exp(-tx^2)$ is defined for~$t > 0$. 
Notice that $\Theta(t) = 1 + 2 \sum_{x \geq 1} \exp(-tx^2) \leq 
1 + 2 \int_0^\infty \exp(-tx^2) dx = 1 + \sqrt{\frac{\pi}{t}}$. Hence
$\Theta(t) \leq \frac{1 + \sqrt{\pi}}{\sqrt{t}}$ for $t\leq 1$ and 
$\Theta(t) \leq 1 + \sqrt{\pi}$ for~$t\geq 1$. 
As a consequence, we have:
\begin{equation}
\label{eq:interval}
N_i \leq (4e(1 + \sqrt{\pi}))^d \cdot \prod_{j \geq i} \max\left(1, 
\frac{\sqrt{A}}{\sqrt{d} \|\vec{b}_i^*\|}\right).
\end{equation}

One thus concludes that the cost of the enumeration procedure is bounded by:
$$ \mathcal{P}(n, \log A, \log B) \cdot 2^{O(d)} \cdot
\max_{I \subset \iinterv{1,d}} \left(
\frac{(\sqrt{A})^{\card{I}}}{(\sqrt{d})^{\card{I}} 
\prod_{i \in I} \|\vec{b}_i^*\|}\right).$$

\subsection{The Case of Quasi-HKZ-Reduced Bases}

We know suppose that~$A = \|\vec{b}_1\|^2$ and that the input 
basis~$(\vec{b}_1,\ldots,\vec{b}_d)$ is quasi-HKZ-reduced.
Our first step is to strengthen the quasi-HKZ-reducedness hypothesis
to an HKZ-reducedness hypothesis. Let~$I \subset \iinterv{1,d}$. 
If~$1 \notin I$, then, because of the quasi-HKZ-reducedness assumption:
$$\frac{\|\vec{b}_1\|^{\card{I}}}{(\sqrt{d})^{\card{I}} 
\prod_{i \in I} \|\vec{b}_i^*\|} 
\leq 2^d \frac{\|\vec{b}_2^*\|^{\card{I}}}{(\sqrt{d})^{\card{I}} 
\prod_{i \in I} \|\vec{b}_i^*\|}.$$
Otherwise if~$1 \in I$, then we have, by removing~$\|\vec{b}_1^*\|$ from 
the product~$\prod_{i \in I-\{1\}} \|\vec{b}_i^*\|$:
$$\frac{\|\vec{b}_1\|^{\card{I}}}{(\sqrt{d})^{\card{I}} 
\prod_{i \in I} \|\vec{b}_i^*\|} 
\leq 2^d \frac{\|\vec{b}_2^*\|^{\card{I}-1}}{(\sqrt{d})^{\card{I}-1} 
\prod_{i \in I-\{1\}} \|\vec{b}_i^*\|}.$$

As a consequence, in order to obtain Theorem~\ref{th:main}, it suffices 
to prove the following:
\begin{theorem}
\label{th:main4HKZ}
Let $\vec{b}_1,\ldots,\vec{b}_d$ be an HKZ-reduced basis.
Let~$I \subset \iinterv{1,d}$. Then, 
$$\frac{\|\vec{b}_1\|^{\card{I}}}{\prod_{i \in I} \|\vec{b}_i^*\|}
\leq (\sqrt{d})^{\card{I}\left(1 + \log \frac{d}{|I|}\right)} \leq
(\sqrt{d})^{\frac{d}{e} + \card{I}}.$$
\end{theorem}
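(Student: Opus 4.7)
The second inequality $(\sqrt{d})^{|I|(1+\log(d/|I|))} \leq (\sqrt{d})^{d/e + |I|}$ is elementary: setting $x = |I|/d \in (0,1]$, the maximum of $-x \log x$ on this interval equals $1/e$, attained at $x = 1/e$, so $|I| \log(d/|I|) \leq d/e$, and the inequality follows by exponentiation.

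For the first inequality, I would argue by induction on $d$; the case $d = 1$ is immediate. In the inductive step, first observe that if $1 \in I$ then $\|\vec{b}_1^*\| = \|\vec{b}_1\|$ reduces the ratio to the same statement for $I \setminus \{1\}$, which has one fewer element and satisfies $1 \notin I \setminus \{1\}$; since $k \mapsto k(1 + \log(d/k))$ is non-decreasing on $(0, d]$ (its derivative equals $\log(d/k) \geq 0$), the target bound is weaker for the smaller set, so this reduction is valid. We may therefore assume $1 \notin I$. In that case, $(\vec{b}_2^*, \ldots, \vec{b}_d^*)$ is HKZ-reduced of dimension $d-1$ with GSO norms $\|\vec{b}_2^*\|, \ldots, \|\vec{b}_d^*\|$, so the induction hypothesis applied to it for $I$ (reindexed into $\{1, \ldots, d-1\}$) yields
$$\frac{\|\vec{b}_2^*\|^{|I|}}{\prod_{i \in I} \|\vec{b}_i^*\|} \leq (\sqrt{d-1})^{|I|(1+\log((d-1)/|I|))}.$$
To bridge from $\|\vec{b}_2^*\|$ to $\|\vec{b}_1\|$, I use the key fact that any prefix $(\vec{b}_1, \ldots, \vec{b}_b)$ of an HKZ-reduced basis is itself HKZ-reduced for the sublattice it spans, with GSO norms $\|\vec{b}_1^*\|, \ldots, \|\vec{b}_b^*\|$. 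Lemma~\ref{le:hkz} applied to such a prefix at its first index gives, for every $1 \leq b \leq d$,
$$\|\vec{b}_1\|^b \leq \left(\frac{b+4}{4}\right)^{b/2} \prod_{j=1}^b \|\vec{b}_j^*\|.$$

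The main obstacle is combining these two ingredients correctly. A naive approach that takes only Lemma~\ref{le:hkz} at $b = 2$ (which yields the crude bound $\|\vec{b}_1\|/\|\vec{b}_2^*\| \leq 3/2$) and applies the induction hypothesis once loses a factor of $(3/2)^{|I|}$ per induction step on $d$, accumulating to a bound that is exponential in $d$, far weaker than claimed. At the other extreme, using Lemma~\ref{le:hkz} only at $b = d$ controls $\|\vec{b}_1\|$ merely against the geometric mean $(\det L)^{1/d}$ and fails to capture the fine dependence on $|I|$. The proof must therefore combine Lemma~\ref{le:hkz} applied at several prefix sizes $b$ interpolating between $|I|$ and $d$, with weights tuned so that the logarithmic slack terms sum to match the entropy-like expression $|I|(1 + \log(d/|I|)) \approx \sum_{j=1}^{|I|} \log(d/j)$ appearing on the right-hand side; finding and verifying this weighted combination is the technical heart of the argument.
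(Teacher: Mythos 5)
Your reduction to the case $1\notin I$, the check that $k\mapsto k(1+\log(d/k))$ is non-decreasing, and the elementary second inequality are all fine, and your observation that any prefix of an HKZ-reduced basis is itself HKZ-reduced (so that Lemma~\ref{le:hkz} gives $\|\vec{b}_1\|^b\leq((b+4)/4)^{b/2}\prod_{j\leq b}\|\vec{b}_j^*\|$ for every $b$) is correct and is indeed the raw material the paper uses. But you have not actually proved the first inequality: you correctly identify that a single scale ($b=2$ or $b=d$) fails and that some weighted combination of prefix sizes is needed, and then you stop, describing the combination you would need as ``the technical heart of the argument.'' That heart is precisely what is missing. The difficulty is real: for, say, $I$ a union of several separated intervals, the naive induction on $d$ accumulates losses badly, and it is not clear from your sketch what weights would make the logarithmic slacks telescope to $|I|(1+\log(d/|I|))$ rather than to something much larger.

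The paper resolves this by a genuinely different strategy. Rather than inducting on $d$, it inducts on the number of maximal intervals (``blocks'') in the decomposition of $I$. It introduces the geometric means $\pi_J=\bigl(\prod_{i\in J}\|\vec{b}_i^*\|\bigr)^{1/|J|}$ and the interpolated $\pis_{[a,b]}$ for non-integer left endpoint $a$, proves an averaged Schnorr lemma (Lemma~\ref{le:mschnorr}) comparing $\pi_{\iinterv{1,k}}$ and $\pi_{\iinterv{k+1,d}}$ via iterated Minkowski bounds, extends it to real cut points (Lemma~\ref{le:reel}), and then runs a delicate amortised recombination (Lemma~\ref{le:recurs}) that merges blocks while maintaining a strictly decreasing sequence of ``densities'' $\delta_i$; a final convexity argument (Lemma~\ref{le:convexity}) on $x\mapsto x\log x$ collapses the block-by-block bound into the stated one. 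So the two approaches diverge at exactly the point where your sketch stops: you anticipate that a clever weighting of Minkowski at various prefix lengths should work, but the paper does not find such a one-shot weighting; it instead organises the estimate around the interval structure of $I$ and the monotone density sequence, which is what makes the recombination tight enough.
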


\subsection{A Property on the Geometry of HKZ-Reduced Bases}
In this section, we prove Theorem~\ref{th:main4HKZ}, which is the last
missing part to obtain the announced result. 
Some parts of the proof are fairly technical and have been 
postponed to the appendix (this is the case for the proofs of 
Lemmata~\ref{le:hermite}--\ref{le:reel}).
As a guide, the reader should consider the typical
case where $(\vec{b}_i)_{1\leq i \leq d}$ is an HKZ-reduced basis for
which~$(\|\vec{b}_i^*\|)_i$ is a non-increasing sequence. In that case,
the shape of the interval~$I$ that is provided by Equation(\ref{eq:interval})
is much simpler: it is an interval~$\iinterv{i,d}$ starting at some index~$i$.
Lemmata~\ref{le:mschnorr} and~\ref{le:hermite} (which should thus be
considered as the core of the proof) and the fact that~$x\log x \geq
-1/e$ for~$x\in [0, 1]$ are sufficient to deal with such simple intervals,
and thus to provide the result.

The difficulties arise when the shape of the set~$I$ under
study becomes more complicated. Though the proof is technically quite
involved, the strategy itself can be summed up in a few words. We
split our HKZ-reduced basis into {\em blocks} (defined by the
expression of~$I$ as a union of intervals), i.e., groups of consecutive
vectors $b_i, b_{i+1}, \dots, b_{j-1}$ such that $i, \ldots, k-1
\not\in I$ and $k, \dots, j-1 \in I$.  The former vectors will be the
``large ones'', and the latter the ``small ones''.
Over each block, Lemma~\ref{le:mschnorr} relates the 
average size of
the small vectors to the average size of the whole block. 
We consider the blocks by decreasing indices (in Lemma~\ref{le:recurs}),
and use an amortised analysis to combine finely the local behaviours
on blocks to obtain a global bound. This recombination is extremely 
tight, and in order to get the desired bound we use 
``parts of vectors'' (non-integral powers of them). This is why we 
need to introduce the~$\pis$ (in Definition~\ref{def:pis}). 
A final convexity argument provided
by Lemma~\ref{le:convexity} gives the result.

In the sequel, $(\vec{b}_i)_{1\leq i \leq d}$ is an HKZ-reduced basis
of a lattice~$L$ of dimension~$d \geq 2$.
\begin{definition}\label{def:pi}
For any~$I \subset \iinterv{1,d}$, we 
define~$\pi_{I} = \left( \prod_{i\in I} \|\vec{b}_i^*\| 
\right)^{\frac{1}{\card{I}}}$. 
Moreover, if~$k \in \iinterv{1, d-1}$, we 
define~$\Gamma_{d}(k) = \prod_{i=d-k}^{d-1} \gamma_{i+1}^{\frac{1}{2i}}$.
\end{definition}

For technical purposes in the proof of Lemma~\ref{le:recurs}, we also need the
following definition.
\begin{definition}\label{def:pis}
If~$1 \leq a < b \leq d$, where~$a$ is
real and~$b$ is an integer, we define:
$$\pis_{[a,b]} = \left( 
\|\vec{b}_{\lfloor a \rfloor}^*\|^{1 - a + \lfloor a \rfloor} \cdot 
\prod_{i=\lfloor a \rfloor + 1}^{b} \|\vec{b}_i^*\|
\right)^{\frac{1}{b+1-a}} 
= \left(\pi_{\iinterv{\lfloor a \rfloor, b}}\right)^
{\frac{(b + 1 -\lfloor a \rfloor)(1-a+\lfloor a \rfloor)}{b+1-a}} 
\cdot 
\left(\pi_{\iinterv{\lfloor a \rfloor + 1, b}}\right)^
{\frac{(b - \lfloor a \rfloor)(a-\lfloor a \rfloor )}{b+1-a}}.
$$
\end{definition}

Note that Definition~\ref{def:pis} naturally 
extends Definition~\ref{def:pi}, 
since~$\pis_{[a, b]} = \pi_{\iinterv{a, b}}$ when~$a$ is an integer.

We need estimates on the order of magnitude of~$\Gamma$, and a technical
lemma allowing us to recombine such estimates. Basically, the following
lemma is a precise version of the identity:
$$\log \Gamma_d(k) \approx \int_{x=d-k}^{d} \frac{x}{2} \log x \, \mathrm{d}x
\approx \frac{\log^2(d) - \log^2(d-k)}{4} \aplt 
\frac{\log d}{2} \log \frac{d}{d-k}.$$

\begin{lemma}
\label{le:hermite}
For all~$1\leq k<d$, we have~$\Gamma_d(k) \leq \sqrt{d}^{\log \frac{d}{d-k}}$.
\end{lemma}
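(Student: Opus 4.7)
The plan is to take logarithms and compare the resulting discrete sum to an integral, making rigorous the heuristic displayed just before the lemma. Taking logs, the claim becomes $\sum_{i=d-k}^{d-1} \frac{\log \gamma_{i+1}}{2i} \leq \frac{\log d}{2} \log \frac{d}{d-k}$. The underlying identity is $\int_{d-k}^{d} \frac{\log x}{2x}\,dx = \frac{\log^2 d - \log^2(d-k)}{4}$, which is at most $\frac{\log d}{2} \log \frac{d}{d-k}$ via the factorization $\log^2 d - \log^2(d-k) = (\log d + \log(d-k)) \log \frac{d}{d-k} \leq 2 \log d \cdot \log \frac{d}{d-k}$.

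I would carry this out in three steps. First, invoke Minkowski's theorem to replace $\log \gamma_{i+1}$ by $\log \frac{i+5}{4}$; this quantity behaves like $\log i$ for large $i$, with a favorable $-\log 4$ shift at every term. Second, observe that $g(x) = \log(\frac{x+5}{4})/(2x)$ is decreasing on $[1,\infty)$ (a one-line check on the derivative), so the sum is dominated by a left-shifted integral, $\sum_{i=d-k}^{d-1} g(i) \leq \int_{d-k-1}^{d-1} g(x)\,dx$, with a boundary adjustment if $d-k=1$. Third, bound this integral above by $\int_{d-k}^{d} \frac{\log x}{2x}\,dx$ up to controlled error, evaluate it in closed form as $\frac{\log^2 d - \log^2(d-k)}{4}$, and apply the factorization identity.

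The main obstacle is a delicate accounting: the integral bound $\frac{\log^2 d - \log^2(d-k)}{4}$ sits below the target $\frac{\log d}{2} \log \frac{d}{d-k}$ by exactly $\frac{(\log(d/(d-k)))^2}{4}$, which shrinks quadratically as $k \to 0$. Every error term coming from the sum-to-integral comparison and from the $\log \frac{x+5}{4} \leq \log x$ approximation must fit inside this shrinking gap. What rescues the estimate in the small-$k$ regime is the accumulated $-\log 4$ correction from Minkowski: summed with weights $\frac{1}{2i}$, it produces a negative contribution of order $-\frac{\log 4}{2} \log \frac{d}{d-k}$, and this dominates the positive lower-order errors. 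Formalizing this book-keeping cleanly is exactly the technical work that the paper defers to its appendix.
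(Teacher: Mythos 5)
Your plan is genuinely different from the paper's: you propose a sum-to-integral comparison, whereas the paper proves the bound by induction on $k$, comparing the discrete increments $\log\Gamma_d(k+1)-\log\Gamma_d(k) = \frac{\log\gamma_{d-k}}{2(d-k-1)}$ against $G_d(k+1)-G_d(k)\geq\frac{\log d}{2(d-k)}$ (with $G_d(k)=\tfrac12\log d\log\frac{d}{d-k}$). That comparison reduces to $\frac{(d-k)\log\frac{d-k+4}{4}}{d-k-1}\leq\log d$, which the paper proves via a monotonicity lemma on the sequence $n\mapsto\frac{n\log((n+4)/4)}{n-1}$. So the paper never evaluates any integral and never loses anything to sum-to-integral error.

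However, what you have written is a plan, not a proof, and the gap sits exactly where you say it does. Step three, ``bound this integral above by $\int_{d-k}^{d}\frac{\log x}{2x}\,dx$ up to controlled error,'' is the entire content of the lemma and you do not carry it out; you say yourself that ``formalizing this book-keeping cleanly is exactly the technical work that the paper defers to its appendix.'' To see that the book-keeping is genuinely nontrivial: after the shift $y=x+1$ you would want the pointwise inequality $\frac{\log((y+4)/4)}{y-1}\leq\frac{\log y}{y}$, i.e.\ $y\log\frac{y+4}{4}\leq (y-1)\log y$, and this \emph{fails} for $y=2$ (the left side is $2\log(3/2)\approx 0.81$, the right side $\log 2\approx 0.69$). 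It does hold for $y\geq 3$, so your route can be salvaged for $d-k\geq 3$ by this pointwise comparison (which then makes the $-\log 4$ shift the actual mechanism), but the cases $d-k\in\{1,2\}$ need separate treatment that you only wave at (``a boundary adjustment if $d-k=1$''). In contrast, the paper's discrete-increment comparison only needs the weaker uniform bound $\frac{n\log((n+4)/4)}{n-1}\leq\log d$ (a bound by $\log d$, not by $\log n$), which is monotone and clean and handles the whole range at once. So: correct heuristic, correct identification of where the slack vanishes, but the decisive step is missing, and the obstruction you would hit when trying to fill it in (the failure of the pointwise bound at small $y$) is one your sketch does not anticipate.
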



The following lemma derives from the convexity of the function~$x \mapsto 
x \log x$.
\begin{lemma}
\label{le:convexity}
Let $\Delta\geq 1$, and define~$F_\Delta(k, d) = 
\Delta^{-k \log \frac{k}{d}}$.
We have, for all integer~$t$, for all integers~$k_1, \ldots, k_t$ 
and~$d_1, \ldots, d_t$ such that
$1\leq k_i<d_i$ for all~$i\leq t$,
$$\prod_{i \leq t} F_{\Delta}(k_i, d_i)
\leq F_{\Delta}\left(\sum_{i\leq t} k_i, \sum_{i\leq t} d_i\right).$$
\end{lemma}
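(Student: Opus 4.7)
The plan is to take logarithms and use that $\log \Delta \geq 0$ (since $\Delta \geq 1$) to reduce to an assertion not involving~$\Delta$. Writing $K = \sum_{i\leq t} k_i$ and $D = \sum_{i\leq t} d_i$, the definition of~$F_\Delta$ yields
$$\log \prod_{i \leq t} F_\Delta(k_i,d_i) = \log \Delta \cdot \sum_{i \leq t} k_i \log \frac{d_i}{k_i}
\quad \text{and} \quad
\log F_\Delta(K,D) = \log \Delta \cdot K \log \frac{D}{K},$$
so I would reduce the claim to proving the purely numerical inequality
$$\sum_{i \leq t} k_i \log \frac{d_i}{k_i} \;\leq\; K \log \frac{D}{K}.$$

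This is the classical log-sum inequality, and the plan is to derive it in one line from Jensen's inequality. Dividing by~$K$, the left-hand side becomes a convex combination of the numbers $\log(d_i/k_i)$ with weights $\omega_i = k_i/K \geq 0$ summing to~$1$; concavity of~$\log$ then bounds it above by $\log\bigl(\sum_i \omega_i \cdot d_i/k_i\bigr) = \log(D/K)$, and multiplying back by~$K$ gives the result. Equivalently, as suggested by the remark preceding the lemma, the same inequality can be read as an application of Jensen to the convex function $f(x) = x \log x$: write $\sum_i k_i \log(k_i/d_i) = D \sum_i (d_i/D) \, f(k_i/d_i)$ and apply convexity with weights $d_i/D$ and points $k_i/d_i$.

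There is no real obstacle here: the claim is simply a rescaled form of the log-sum inequality, and the hypotheses $1 \leq k_i < d_i$ are used only to ensure that all quantities are well-defined and that $F_\Delta(K,D)$ makes sense. In particular, no induction on~$t$ is needed, since a single invocation of Jensen handles arbitrary~$t$ at once; if one preferred an inductive presentation, the case $t = 2$ (a one-variable convexity check for $x \log x$) combined with an induction that groups the last two terms would yield the result equally easily.
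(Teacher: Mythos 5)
Your proof is correct and takes essentially the same route as the paper: both reduce, after taking logarithms, to the log-sum inequality $\sum_{i\le t} k_i \log(d_i/k_i) \le K \log(D/K)$, which the paper obtains by applying Jensen to the convex function $x \mapsto x\log x$ with weights $d_i/D$ and points $k_i/d_i$ --- exactly the second derivation you sketch. Your primary phrasing via concavity of $\log$ with weights $k_i/K$ and points $d_i/k_i$ is just the dual form of the same Jensen application.
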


We now give an ``averaged'' version
of~\cite[Lemma 4]{Schnorr87}. For completeness, we give
its proof in appendix. This provides the result claimed in
Theorem~\ref{th:main4HKZ} for any interval~$I=\iinterv{i,j}$, for
any~$i \leq j \leq d$.
\begin{lemma}
\label{le:mschnorr}
For all~$k \in \iinterv{0, d-1}$, we have
$$\pi_{\iinterv{1,k}} \leq
\left(\Gamma_d(k)\right)^{d/k} \cdot \pi_{\iinterv{k+1, d}} \
\ and \ \
\pi_{\iinterv{k+1,d}} \geq \left(\Gamma_d(k)\right)^{-1} 
\cdot (\det L)^{1/d} \geq \sqrt{d}^{\log\frac{d -k}{d}} (\det L)^{1/d}.$$
\end{lemma}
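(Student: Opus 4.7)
My plan is to prove the second chain of inequalities first (the lower bound on $\pi_{\iinterv{k+1,d}}$), and then observe that the bound on $\pi_{\iinterv{1,k}}$ follows by pure algebra. The key tool is a logarithmic reformulation of Lemma~\ref{le:hkz}: since $(\vec{b}_j^*)_{j\geq i}$ is an HKZ-reduced basis of the projected lattice $L_i$, Minkowski's theorem applied to $L_i$ gives
$$\|\vec{b}_i^*\| \;\leq\; \sqrt{\gamma_{d-i+1}} \cdot \left(\prod_{j=i}^d \|\vec{b}_j^*\|\right)^{1/(d-i+1)}.$$

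Next, I introduce the shorthand $\mu_k \assign \log \pi_{\iinterv{k,d}} = \frac{1}{d-k+1}\sum_{j=k}^d \log\|\vec{b}_j^*\|$, so that in particular $\mu_1 = \frac{1}{d}\log \det L$ and $\log \pi_{\iinterv{k+1,d}} = \mu_{k+1}$. Taking the logarithm of the displayed inequality yields $\log\|\vec{b}_k^*\| \leq \frac{1}{2}\log \gamma_{d-k+1} + \mu_k$, and combining this with the recursive identity $(d-k+1)\mu_k = \log\|\vec{b}_k^*\| + (d-k)\mu_{k+1}$ leads after simplification (the $\mu_k$ term on the right cancels one unit of weight on the left) to
$$\mu_{k+1} \;\geq\; \mu_k \;-\; \frac{1}{2(d-k)}\log \gamma_{d-k+1}, \qquad k \in \iinterv{1, d-1}.$$
Telescoping from $k=1$ upwards and reindexing the resulting sum via $\ell = d-k$ (which runs over $\iinterv{d-k, d-1}$) produces exactly
$$\mu_{k+1} \;\geq\; \mu_1 \;-\; \sum_{\ell = d-k}^{d-1} \frac{1}{2\ell}\log \gamma_{\ell+1} \;=\; \mu_1 - \log \Gamma_d(k),$$
which after exponentiation is the central claim $\pi_{\iinterv{k+1,d}} \geq \Gamma_d(k)^{-1} (\det L)^{1/d}$. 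The final inequality $\Gamma_d(k)^{-1} \geq \sqrt{d}^{\log((d-k)/d)}$ is then immediate from Lemma~\ref{le:hermite}.

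Finally, the first inequality $\pi_{\iinterv{1,k}} \leq \Gamma_d(k)^{d/k}\,\pi_{\iinterv{k+1,d}}$ is equivalent to the one just proved. Indeed, using the factorisation $\pi_{\iinterv{1,k}}^k \cdot \pi_{\iinterv{k+1,d}}^{d-k} = \det L$, one checks that the desired bound rearranges to $(\det L)^{1/d} \leq \Gamma_d(k) \cdot \pi_{\iinterv{k+1,d}}$, i.e., to the statement already established. I do not expect a real obstacle here: the whole argument is a one-step telescoping of an HKZ-driven recursion. The only point requiring some care is verifying that the reindexing $\ell = d-k$ produces precisely the product defining $\Gamma_d(k)$ in Definition~\ref{def:pi}, and that the recurrence holds with the correct range of $k$.
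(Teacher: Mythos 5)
Your proof is correct and uses essentially the same ingredients as the paper's: Minkowski's bound applied to the projected HKZ-reduced bases, combined with the factorisation $\det L = \pi_{\iinterv{1,k}}^k \cdot \pi_{\iinterv{k+1,d}}^{d-k}$. The only presentational difference is that you telescope the log-space recursion $\mu_{k+1}\geq \mu_k - \tfrac{1}{2(d-k)}\log\gamma_{d-k+1}$ to get the second inequality first and then derive the first one, whereas the paper runs the induction directly on the first inequality and deduces the second from it — a reversal of order, not of substance.
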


The following lemma extends Lemma~\ref{le:mschnorr} to the case
where~$k$ is not necessarily an integer. Its proof is conceptually
simple, but involves rather heavy elementary calculus. It would be
simpler to obtain it with a relaxation factor. The result is
nevertheless worth the effort since the shape of the bound is
extremely tractable in the sequel.
\begin{lemma}
\label{le:reel}
If $1 \leq x_1 < x_2 < d$ are real and in~$[1,d)$, 
then~$\pis_{[x_2, d]} \geq \sqrt{d}^{\log \frac{d-x_2}{d-x_1}} \cdot 
\pis_{[x_1, d]}$.
\end{lemma}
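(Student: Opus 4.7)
Plan of proof.

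My plan is to recast the inequality as a monotonicity statement: if $f(a) = \log \pis_{[a, d]}$ and $g(a) = \frac{1}{2}\log d \cdot \log(d-a)$, the lemma is equivalent to $(f-g)(x_2) \geq (f-g)(x_1)$, i.e., $f-g$ being non-decreasing on $[1, d)$. Since both are differentiable on each open cell $(m, m+1)$ with $m \in \iinterv{1, d-1}$, it suffices to (a) check that $f$ is continuous at every integer in $[1, d)$, and (b) establish the pointwise inequality $f'(a) \geq g'(a)$ on each cell.

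For (a), both one-sided limits of $\pis_{[a, d]}$ at an integer $m$ collapse to $\pi_{\iinterv{m, d}}$: on the cell $(m-1, m)$ the fractional exponent $1-a+m-1$ on $\|\vec{b}_{m-1}^*\|$ tends to $0$ as $a \to m^-$, while on $(m, m+1)$ the corresponding exponent on $\|\vec{b}_m^*\|$ tends to $1$ as $a \to m^+$, and in both cases the outer normalisation $1/(d+1-a)$ rescales the products so that the whole expression matches $\pi_{\iinterv{m, d}}$.

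For (b), writing $a = m + t$, $N = d - m$, $\ell_m = \log\|\vec{b}_m^*\|$ and $\beta = \log \pi_{\iinterv{m+1, d}}$, direct differentiation of $f(m+t) = \bigl((1-t)\ell_m + N\beta\bigr)/(N+1-t)$ gives
\[
f'(m+t) = \frac{N(\beta - \ell_m)}{(N+1-t)^2}, \qquad g'(m+t) = -\frac{\log d}{2(N-t)}.
\]
Applying Lemma~\ref{le:mschnorr} with $k=1$ to the HKZ-reduced projected sub-basis $(\vec{b}_m', \ldots, \vec{b}_d')$ of dimension $N+1$ yields $\|\vec{b}_m^*\| \leq \gamma_{N+1}^{(N+1)/(2N)} \pi_{\iinterv{m+1, d}}$, hence $\beta - \ell_m \geq -\frac{N+1}{2N}\log \gamma_{N+1}$. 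Minkowski's bound $\gamma_{N+1} \leq (N+5)/4$, together with $N+1 \leq d$ (because $m \geq 1$) and $d \geq 2$, forces $\gamma_{N+1} \leq d$ and so $\log \gamma_{N+1} \leq \log d$. Substituting both estimates into $f'(a) \geq g'(a)$ reduces the claim to the polynomial inequality $(N+1-t)^2 \geq (N+1)(N-t)$, which rearranges to $(N+1)(1-t) + t^2 \geq 0$ and is plainly true for $t \in (0,1)$.

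The main obstacle is the tightness of step (b): the HKZ bound contributes a factor of $\log \gamma_{N+1}$ with exponent of order $(N+1)/N$, while $g'$ carries denominator $N-t$ rather than the $N+1-t$ appearing in $f'$, and the available ``gift'' of $N-t$ versus $N+1-t$ has to be consumed exactly to absorb the HKZ factor after applying $\gamma_{N+1} \leq d$. Once (b) is obtained, the continuity from (a) promotes non-decrease on each open cell to non-decrease on all of $[1, d)$, yielding the stated inequality.
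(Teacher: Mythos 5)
Your proposal is correct, and it takes a genuinely different and cleaner route than the paper's proof. The paper argues directly with the defining convex combination $\pis_{[x,d]} = \pi_{\iinterv{\lfloor x \rfloor, d}}^{\lambda} \pi_{\iinterv{\lfloor x \rfloor + 1, d}}^{1-\lambda}$, applying the integer version (Lemma~\ref{le:mschnorr}) three times to move $\pis_{[x_1,d]}$ to $\pis_{[x_2,d]}$, then using concavity of $\log$ to bound the resulting product of $\Gamma$-factors by a weighted arithmetic mean $E(x_1,x_2)$, and finally verifying $E(x_1,x_2)\leq (d-x_1)/(d-x_2)$ by a case analysis on the sign of $\lambda_1 - \lambda_2$ and on whether $\lfloor x_1\rfloor = \lfloor x_2\rfloor$, with some algebraic identities explicitly checked by computer algebra. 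You instead observe that the statement is exactly the monotonicity of $a \mapsto \log\pis_{[a,d]} - \tfrac12\log d\cdot\log(d-a)$ on $[1,d)$, verify continuity at integers (both one-sided limits of $\pis_{[a,d]}$ coincide with $\pi_{\iinterv{m,d]}}$) and prove the derivative inequality on each open cell. Your derivative computation $f'(m+t)=\frac{N(\beta-\ell_m)}{(N+1-t)^2}$ is correct; combined with the $k=1$ case of Lemma~\ref{le:mschnorr} applied to the projected $(N+1)$-dimensional HKZ basis, Minkowski's bound $\gamma_{N+1}\leq (N+5)/4 \leq d$ (valid since $N+1\leq d$ and $d\geq 2$), the required inequality $(N+1)(N-t)\log\gamma_{N+1} \leq (N+1-t)^2\log d$ reduces to $(N+1)(1-t)+t^2\geq 0$, which is trivially true. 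What you gain is the complete elimination of the case analysis and the opaque polynomial verifications; the slack you need is exactly the factor $\frac{(N+1-t)^2}{(N+1)(N-t)}\geq 1$ that falls out of the calculus, and your framing makes transparent where the $\sqrt d^{\log}$ shape of the bound comes from.
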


We prove Theorem~\ref{th:main4HKZ} by induction on the number of
intervals occurring in the expression of the set~$I$ as a union of
intervals.  The following lemma is the induction step. This is a
recombination step, where we join one block (between the indices~$1$
and~$v$, the ``small vectors'' being those between~$u+1$ and~$v$) to
one or more already considered blocks on its right. An important point
is to ensure that the densities~$\delta_i$ defined below actually
decrease.
\begin{lemma}
\label{le:recurs}
Let~$(\vec{b}_1, \ldots, \vec{b}_d)$ be an HKZ-reduced basis. 
Let~$v\in \iinterv{2, d}$, $I\subset \iinterv{v+1, d}$ 
and~$u \in \iinterv{1, v}$. Assume that:
$$\pi_I^{\card{I}} \geq \prod_{i<t} \left(
\pi_{\iinterv{\alpha_i+1,\alpha_{i+1}}}^{\card{I_i}} \cdot
\sqrt{d}^{\card{I_i} \log \delta_i }\right),$$
where~$I_i = I \cap \iinterv{\alpha_i + 1,\alpha_{i+1}}, 
\delta_i = \frac{\card{I_i}}{\alpha_{i+1}-\alpha_i}$ is the
density of $I$ in~$\iinterv{\alpha_i + 1, \alpha_{i+1}}$, and the \
integers~$t$ and~$\alpha_i$'s, and the densities~$\delta_i$ 
satisfy~$t \geq 1$, 
$v = \alpha_1 < \alpha_2 < \ldots < \alpha_t \leq d$ 
and~$1 \geq \delta_1 > \ldots > \delta_{t-1}>0$.

Then, we have
$$\pi_{I'}^{\card{I'}} \geq \prod_{i<t'} \left(
\pi_{\iinterv{\alpha_i' + 1,\alpha_{i+1}'}}^{\card{I'_i}} 
\cdot
\sqrt{d}^{\card{I'_i} \log \delta_i'} \right),$$
where~$I'=\iinterv{u+1,v} \cup I, 
I'_i = I' \cap \iinterv{\alpha'_i + 1, \alpha'_{i+1}}, 
\delta'_i = \frac{\card{I_i'}}{\alpha_{i+1}'-\alpha_i'}$
and the integers~$t'$ and~$\alpha_i'$'s, and the densities~$\delta'_i$ 
satisfy~$t' \geq 1$, 
$0 = \alpha_1' < \alpha_2' < \ldots < \alpha'_{t'} \leq d$
and~$1 \geq \delta'_1 > \ldots > \delta'_{t'-1} >0$.
\end{lemma}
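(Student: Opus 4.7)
The starting point is the identity $\pi_{I'}^{\card{I'}} = \pi_I^{\card{I}} \cdot \prod_{i=u+1}^v \|\vec{b}_i^*\|$. Using the hypothesis to lower-bound $\pi_I^{\card{I}}$, it suffices to construct a new block decomposition of $I'$ starting at $\alpha'_1 = 0$ with strictly decreasing densities, and to verify the corresponding inequality. The natural candidate prepends the block $\iinterv{u+1,v}$ (entirely contained in $I'$) at density $\delta_* := (v-u)/v$. I split into cases according to where $\delta_*$ sits among $\delta_1 > \ldots > \delta_{t-1}$.

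When $\delta_* > \delta_1$, I simply prepend: set $\alpha'_1 = 0$ and $\alpha'_j = \alpha_{j-1}$ for $j \geq 2$, so that $\delta'_1 = \delta_* > \delta_1 = \delta'_2 > \ldots$ and the contributions of the unchanged blocks cancel between the two sides. What remains is the single inequality $\pi_{\iinterv{u+1,v}} \geq \pi_{\iinterv{1,v}} \cdot \sqrt{d}^{\log \delta_*}$. This follows from applying Lemma~\ref{le:mschnorr} to the HKZ-reduced sub-basis $(\vec{b}_1,\ldots,\vec{b}_v)$ (which produces the analogous inequality with $\sqrt{v}$ in place of $\sqrt{d}$), combined with the fact that $\log \delta_* \leq 0$ and $\sqrt{v} \leq \sqrt{d}$.

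When $\delta_* \leq \delta_1$, I let $k \in \iinterv{1,t-1}$ be the smallest index such that the merged density $\delta^{(k)} := ((v-u) + \sum_{i=1}^k \card{I_i})/\alpha_{k+1}$ strictly exceeds $\delta_{k+1}$ (with the convention $\delta_t = 0$, which ensures $k$ exists). Setting $\alpha'_1 = 0$, $\alpha'_2 = \alpha_{k+1}$, and $\alpha'_j = \alpha_{k+j-1}$ for $j \geq 3$, the resulting densities $\delta'_1 = \delta^{(k)} > \delta_{k+1} = \delta'_2 > \delta'_3 > \ldots$ are strictly decreasing by the choice of $k$ together with the original monotonicity of the $\delta_i$'s. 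The contributions of the untouched blocks $I_{k+1},\ldots,I_{t-1}$ again cancel between the two sides.

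The main obstacle is then the residual inequality for the merged leading block, namely
\[
\pi_{\iinterv{1, \alpha_{k+1}}}^{N} \cdot \sqrt{d}^{\,N \log(N/\alpha_{k+1})}
\leq \pi_{\iinterv{u+1, v}}^{v-u} \cdot \prod_{i=1}^{k} \pi_{\iinterv{\alpha_i+1, \alpha_{i+1}}}^{\card{I_i}} \cdot \sqrt{d}^{\,\card{I_i} \log \delta_i},
\]
with $N := (v-u) + \sum_{i=1}^k \card{I_i}$. My plan is to attack this via a telescopic application of Lemma~\ref{le:reel} to the HKZ-reduced sub-basis $(\vec{b}_1,\ldots,\vec{b}_{\alpha_{k+1}})$: I introduce a sequence of real breakpoints $u = y_0 < y_1 < \ldots < y_k < \alpha_{k+1}$, where each $y_j - y_{j-1}$ is calibrated so that the interpolated tails $\pis_{[y_{j-1}, \alpha_{k+1}]}$ and $\pis_{[y_j, \alpha_{k+1}]}$ carry the correct weight matching the local density $\delta_j$. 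Applying Lemma~\ref{le:reel} to each consecutive pair and taking the product produces precisely the structure of the right-hand side; along the way $\sqrt{\alpha_{k+1}} \leq \sqrt{d}$ (and the negativity of the relevant exponents) lets us replace the natural factor $\sqrt{\alpha_{k+1}}$ by $\sqrt{d}$. Since the $y_j$'s are typically non-integer, the fractional $\pis$ notation is indispensable here --- this is precisely the motivation for its introduction. Finally, the convexity of $x \mapsto x \log x$ via Lemma~\ref{le:convexity} is invoked to combine the individual $\card{I_i}\log\delta_i$ contributions on the right into the single term $N\log(N/\alpha_{k+1})$ appearing on the left.
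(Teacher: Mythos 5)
Your high-level plan is on the right track: the case split on where the prepended density $\delta_* = (v-u)/v$ sits, the identification of the index $k$ where the merged density first exceeds the next one, the observation that the remaining blocks cancel, and the recognition that fractional $\pis$ are indispensable are all correct and match the paper's structure. But the two technical ingredients you propose for the residual inequality are not the right ones, and as described they leave a genuine gap.

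First, the ``telescopic application of Lemma~\ref{le:reel}'' with a chain $u = y_0 < y_1 < \cdots < y_k < \alpha_{k+1}$ and common right endpoint $\alpha_{k+1}$ collapses: applying Lemma~\ref{le:reel} to consecutive pairs and multiplying just gives Lemma~\ref{le:reel} once from $y_0$ to $y_k$, which produces no information about the $\pi_{\iinterv{\alpha_i+1,\alpha_{i+1}}}$ factors on the right-hand side. What the paper actually does is fundamentally different: it constructs a \emph{decreasing} sequence $u > \lambda_1 > \lambda_2 > \cdots$ of left endpoints, where $\lambda_j$ is calibrated so that $\frac{v-u+|I \cap \iinterv{v+1,\alpha_j}|}{\alpha_j-\lambda_j} = \delta_j$; at each step one applies Lemma~\ref{le:reel} (to extend the left endpoint from $\lambda_{j-1}$ down to $\lambda_j$), and then \emph{merges} the resulting interpolated $\pis$ with the integer block $\pi_{\iinterv{\alpha_j+1,\alpha_{j+1}}}$ using the fact that the density calibration makes the merged object a $\pis$ over a single interval at the same density. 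It is this interleaving of Lemma~\ref{le:reel} with the block-absorption identity (and the fact that the density after absorption exactly matches $\delta_j$) that makes the $\sqrt d$ exponents recombine to $N \log (N / (\alpha_{k+1}-\lambda_k))$, from which the claimed form follows.

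Second, Lemma~\ref{le:convexity} plays no role in the proof of this lemma and cannot substitute for the merging identity. It bounds the $\sqrt d$ exponents alone, and in the direction $\sum_i \card{I_i}\log\delta_i + (v-u)\log\delta_* \geq N\log(N/\alpha_{k+1})$; it says nothing about how the block geometric means $\pi_{\iinterv{\alpha_i+1,\alpha_{i+1}}}$ combine into $\pi_{\iinterv{1,\alpha_{k+1}}}$. In the paper, convexity is invoked only \emph{afterwards}, in the proof of Theorem~\ref{th:main4HKZ}, to merge the surviving decreasing-density blocks; inside Lemma~\ref{le:recurs} the exponents come out exactly by construction. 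You also drop the $\sqrt d$ contribution of the prepended block $\iinterv{u+1,v}$ from the right-hand side of your residual inequality, which makes even the exponent accounting fail: the convexity inequality includes a $(v-u)\log\delta_*$ term that your residual inequality does not.

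So the plan is in the right spirit but needs the paper's $\lambda$-construction: decreasing breakpoints matched to the running densities, and an explicit merging step between each Lemma~\ref{le:reel} application, rather than a single telescope plus convexity.
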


\begin{proof}
Assume first that $\frac{v-u}{v} \geq \delta_1$, 
Then, thanks to Lemma~\ref{le:mschnorr},
$$
\pi_{I'}^{\card{I'}} = 
\pi_{\iinterv{u+1, v}}^{v-u} \cdot \pi_I^{\card{I}}
\geq 
\pi_{\iinterv{1,v}}^{v-u} \cdot \sqrt{d}^{(v-u)\frac{v-u}{v}} 
\cdot \pi_I^{\card{I}},
$$
we are done with $t'=t+1$, $\alpha'_1 = 1$, $\alpha'_k = \alpha_{k-1}$, 
$\delta'_1 = \frac{v-u}{v}$, $\delta'_k = \delta_{k-1}$. 

Otherwise, we let~$\lu > 0$ be such 
that~$\frac{v-u}{v-\lu} = 
\delta_1 = 
\frac{v-u+ \card{I_1}}{\alpha_2-\lu}$,
where the first equality defines $\lambda_1$ and the second one follows. 
Note that this implies:
$$\pis_{[\lambda_1, v]}^{v-u} 
\cdot \pi_{\iinterv{v+1, \alpha_2}}^{|I_1|}
= \pis_{[\lambda_1, \alpha_2]}^{v-u+|I_1|}.$$

Then, we have, by using Lemma~\ref{le:reel}, 
\begin{eqnarray*}
\pi_{I'}^{\card{I'}} & = & 
\pi_{\iinterv{u+1, v}}^{v-u} \cdot \pi_I^{\card{I}} \\
& \geq &
\left( \pis_{[\lu, v]}^{v-u} 
\cdot \sqrt{d}^{(v-u)\log \frac{v-u}{v-\lu}} \right) 
\cdot \prod_{i<t} \left(
\pi_{\iinterv{\alpha_i + 1,\alpha_{i+1}}}^{\card{I_i}} 
\cdot \sqrt{d}^{\card{I_i} \log \delta_i }\right) \\
& \geq &
\left( \pis_{[\lu, v]}^{v-u} 
\cdot \pi_{\iinterv{v + 1,\alpha_2}}^{\card{I_1}}
\cdot \sqrt{d}^{(v-u)\log \frac{v-u}{v-\lu} + 
\card{I_1} \cdot \log \delta_1} \right)
\cdot \prod_{i=2}^{t-1} \left(
\pi_{\iinterv{\alpha_i + 1,\alpha_{i+1}}}^{\card{I_i}} \cdot 
\sqrt{d}^{\card{I_i} \log \delta_i} \right) \\
& \geq &
\left( \pis_{[\lu, \alpha_2]}^{v-u+\card{I_1}}
\cdot \sqrt{d}^{(v-u+\card{I_1})
\log \frac{v-u+\card{I_1}}{\alpha_2-\lu}}\right)
\cdot \prod_{i=2}^{t-1} \left(
\pi_{\iinterv{\alpha_i + 1,\alpha_{i+1}}}^{\card{I_i}}
\cdot \sqrt{d}^{\card{I_i} \log \delta_i} \right), 
\end{eqnarray*}

If $\frac{v-u+\card{I_1}}{\alpha_2} >
\frac{\card{I_2}}{\alpha_3-\alpha_2}$, we conclude as in the first step,
putting $t'=t$, $\alpha'_1 = 1$, $\alpha'_k = \alpha_k$ for $k \geq 2$, 
$\delta'_1 = (v - u + |I_1|)/\alpha_2$, $\delta'_k = \delta_k$ 
for~$k \geq 2$. 
If this is not the case, we let~$\ld$ be such that:
$$\frac{v-u+\card{I_1}}{\alpha_2-\ld} = 
\delta_2 = 
\frac{v-u+ \card{I \cap \iinterv{\alpha_1 + 1,\alpha_3}}}
{\alpha_3-\ld}.$$ 

\noindent
Notice that since~$\delta_1 = \frac{v - u + |I_1|}{\alpha_2 - \lambda_1} 
> \delta_2$, we have~$\lambda_2 < \lambda_1$. 
A similar sequence of inequalities, using Lemma~\ref{le:reel} to relate
$\pis_{[\lambda_1, \alpha_2]}$ to~$\pis_{[\lambda_2, \alpha_2]}$, leads to 
the following lower bound on~$\pi_{I'}^{\card{I'}}$:
$$\left(\pis_{[\ld, \alpha_3]}^
{v-u+\card{I \cap \iinterv{\alpha_1 + 1,\alpha_3}}} 
\cdot \sqrt{d}^{(v-u+\card{I \cap \iinterv{\alpha_1 + 1,\alpha_3}})
\log \frac{v-u+\card{I \cap \iinterv{\alpha_1 + 1,\alpha_3}}}
{\alpha_3-\ld}}\right) 
\cdot \prod_{i=3}^{t-1} \left(
\pi_{\iinterv{\alpha_i + 1,\alpha_{i+1}}}^{\card{I_i}}
\cdot \sqrt{d}^{\card{I_i} \log \delta_i} \right)$$

We can proceed in the same way, 
constructing~$\lambda_2 > \lambda_3 > \ldots$. 
Suppose first that the construction stops at some point. We have:
\begin{eqnarray*}
\pi_{I'}^{\card{I'}} 
& \geq &
\left( \pi_{\iinterv{1, \alpha_{k+1}}}^
{\card{I' \cap \iinterv{1,\alpha_{k+1}}}} 
\cdot \sqrt{d}^{\card{I' \cap \iinterv{1,\alpha_{k+1}}}
\log \frac{\card{I' \cap \iinterv{1,\alpha_{k+1}}}}
{\alpha_{k+1}}} \right)
\cdot \prod_{i=k+1}^{t-1} \left(
\pi_{\iinterv{\alpha_i+1,\alpha_{i+1}}}^{\card{I_i}}
\sqrt{d}^{\card{I_i} \log \delta_i} \right). 
\end{eqnarray*}
We can then conclude, by putting $t'= t-k+1$, $\alpha'_1 = 1,
\alpha'_j = \alpha_{j+k-1}$ for $j > 1$, 
$\delta'_1 = |I'\cap\iinterv{1, \alpha_{k+1}}|/\alpha_{k+1}$, 
$\delta'_j = \delta_{j+k-1}$ for $j > 1$. 

Otherwise, we end up with:
\begin{eqnarray*}
\pi_{I'}^{\card{I'}} 
& \geq &
\pis_{[\lambda_{t-2}, \alpha_{t-1}]}^{\card{I'}} 
\cdot \sqrt{d}^{\card{I'}
\log \frac{\card{I' \cap \iinterv{1,\alpha_{t-1}}}}
{\alpha_{t-1} - \lambda_{t-2}}}, 
\end{eqnarray*}
to which we can apply Lemma~\ref{le:reel} to 
obtain~$\pi_{I'}^{\card{I'}} 
\geq
\pi_{\iinterv{1, \alpha_{t-1}}}^{\card{I'}}
\cdot \sqrt{d}^{\card{I'} 
\log \frac{\card{I' \cap\iinterv{1, \alpha_{t-1}}}}{\alpha_{t-1}}}$,
which is again in the desired form, with $t'=2$, $\alpha'_1 = 1$, 
$\alpha'_{2} = \alpha_{t-1}$, $\delta'_1 = 
\frac{\card{I' \cap \iinterv{1,\alpha_{t-1}}}}{\alpha_{t-1}}$
\qed
\end{proof}

\noindent Theorem~\ref{th:main4HKZ} now follows from successive 
applications of Lemma~\ref{le:recurs}, as follows:

\medskip
\noindent {\bf Proof of Theorem~\ref{th:main4HKZ}.}
Lemma~\ref{le:recurs} gives us, by induction on the size of the considered
set~$I$, that for all $I\subset \iinterv{1, d}$, we have:

$$\pi_I^{\card{I}} \geq \prod_{i<t} \left(
\pi_{\iinterv{\alpha_i+1,\alpha_{i+1}}}^{\card{I_i}} 
\cdot \sqrt{d}^{\card{I_i} \log \delta_i } \right),$$
where~$I_i = I \cap \iinterv{\alpha_i+1,\alpha_{i+1}}$, and the 
integers~$t$ and~$\alpha_i$'s, and the 
densities~$\delta_i= \frac{\card{I_i}}{\alpha_{i+1}-\alpha_i}$
satisfy~$t \geq 1$, $0 = \alpha_1 < \alpha_2 < \ldots < \alpha_t \leq d$ 
and~$1 \geq \delta_1 > \ldots > \delta_{t-1} >0$.
By using Lemma~\ref{le:convexity} with~$\Delta \assign \sqrt{d},
k_i \assign \card{I_i}$ 
and~$d_i \assign \alpha_{i+1} - \alpha_i$,
we immediately obtain:
$$\pi_I^{\card{I}} \geq 
\left( \sqrt{d}^{\card{I} \log \frac{\card{I}}{\alpha_t-\alpha_1}} \right)
\cdot \left( \prod_{i<t} 
\pi_{\iinterv{\alpha_i + 1,\alpha_{i+1}}}^{\card{I_i} }\right).$$

For convenience, we define~$\delta_t=0$.
Because of the definition of the~$\alpha_i$'s, we have:
\begin{eqnarray*}
\prod_{i<t} \pi_{\iinterv{\alpha_i+1,\alpha_{i+1}}}^{\card{I_i}} 
& = & 
\prod_{i<t} 
\left(\pi_{\iinterv{\alpha_i+1,\alpha_{i+1}}}
^{\alpha_{i+1}-\alpha_i}\right)^{\delta_i}
=  
\prod_{i<t}
\prod_{i \leq j <t}
\left(\pi_{\iinterv{\alpha_i+1,\alpha_{i+1}}}^{\alpha_{i+1}-\alpha_i}\right)
^{\delta_{j} - \delta_{j+1}} \\
& = & 
\prod_{j<t} \left( \prod_{i\leq j}
\pi_{\iinterv{\alpha_i+1,\alpha_{i+1}}}^{\alpha_{i+1}-\alpha_i}\right)
^{\delta_{j} - \delta_{j+1}} 
=
\prod_{j<t}
\left(\pi_{\iinterv{1,\alpha_{j+1}}}^{\alpha_{j+1}}\right)
^{\delta_{j} - \delta_{j+1}}.
\end{eqnarray*}

By using~$t-1$ times Minkowski's theorem, we obtain that:
\begin{eqnarray*}
\pi_I^{\card{I}} & \geq & \sqrt{d}^{|I|\log \frac{|I|}{d}}
\cdot (\|\vec{b}_1\| / \sqrt{d})^{
\sum_{j<t} \alpha_{j+1} (\delta_{j} - \delta_{j+1})} \\
& \geq &
\sqrt{d}^{|I|\log \frac{|I|}{d}} \cdot 
(\|\vec{b}_1\| / 
\sqrt{d})^{\sum_{j<t} (\alpha_{j+1} - \alpha_j) \delta_j}\\
& \geq &
\sqrt{d}^{|I|\left(\log \frac{|I|}{d} - 1\right)}
\cdot \|\vec{b}_1\|^{\card{I}}.
\end{eqnarray*}

The final inequality of the theorem is just the fact that $x \mapsto
x\log(d/x)$ is maximal for~$x = d/e$.  
\qed
\medskip

Note that if~$\max I < d$, we can apply the result to the HKZ-reduced
basis~$(b_1, \dots, b_{\max I})$. In the case where~$I=\{i\}$, we recover
the result of~\cite{Schnorr87} that
\begin{equation}\label{lls}
\|b_i^*\| \geq (\sqrt{i})^{-\log i - 1} \cdot \|b_1\|.
\end{equation}
Still, our result is significantly better to what would have been
obtained by combining several relations of the type of
Equation~(\ref{lls}), when~$|I|$ grows large.  For instance, for a
worst case of our analysis where~$I$ is roughly the
interval~$[d(1-1/e), d]$, this strategy would yield a lower bound of
the form~$\|b_1\|^{d/e} \cdot \sqrt{d}^{(d/e)\log d}$, which is worse
than Helfrich's analysis.

\section{CVP and Other Related Problems}
\label{se:cvp}

In this section, we describe what can be obtained by adapting
our technique to the Closest Vector Problem and other problems
related to strong lattice reduction. We only describe the proofs 
at a high level, since they are relatively straightforward.

In CVP, we are given a basis~$(\vec{b}_1, \ldots, \vec{b}_d)$ and a
target vector~$\vec{t}$, and we look for a lattice vector that is
closest to~$\vec{t}$.  The first step of Kannan's CVP algorithm is to
HKZ-reduce the~$\vec{b}_i$'s. Then one adapts the enumeration
algorithm of Figure~\ref{fig:Enum} for CVP. For the sake of
simplicity, we assume that~$\|\vec{b}_1^*\|$ is the largest of
the~$\|\vec{b}_i^*\|$'s (we refer to Kannan's proof~\cite{Kannan83}
for the general case). By using Babai's nearest hyperplane
algorithm~\cite{Babai86}, we see that there is a lattice
vector~$\vec{b}$ at distance less than~$\sqrt{d} \cdot \|\vec{b}_1\|$
of the target vector~$\vec{t}$. As a consequence, if we take~$A = d
\cdot \|\vec{b}_1\|$ in the adaptation of the enumeration procedure,
we are sure to find a solution. The analysis then reduces
(at the level of Equation~(\ref{eq:interval})) to bound the
ratio~$\frac{{\|\vec{b}_1\|}^d}{\prod_{i\leq d} \|\vec{b}_i^*\|}$,
which can be done with Minkowski's theorem.

\begin{theorem}
Given a basis~$(\vec{b}_1,\ldots,\vec{b}_d)$ and a target vector~$\vec{t}$,
all of them in~$\mR^n$ and 
with integer coordinates whose absolute values are smaller than 
some~$B$, one can find all vectors in the lattice spanned by the~$\vec{b}_i$'s 
that are closest to~$\vec{t}$ in deterministic
time~$\mathcal{P}(\log B, n) \cdot d^{d/2+o(d)}$.
\end{theorem}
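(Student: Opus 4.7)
The plan is to follow Kannan's reduction of CVP to a short-vector enumeration and then recycle the analytic machinery already developed in Section~\ref{se:ana}. First, I HKZ-reduce the basis $(\vec{b}_1,\ldots,\vec{b}_d)$ by invoking the SVP solver of Theorem~\ref{th:main}; this costs $\mathcal{P}(\log B, n) \cdot d^{d/(2e)+o(d)}$, which is absorbed by the target bound. As indicated in the excerpt, I assume for simplicity that $\|\vec{b}_1^*\|$ is the largest of the $\|\vec{b}_i^*\|$'s, the general case following from Kannan's argument. I then modify the enumeration of Figure~\ref{fig:Enum} by writing $\vec{t} = \sum_j t_j \vec{b}_j^*$ in the Gram-Schmidt basis and replacing each intermediate constraint $(x_i + \sum_{j>i} \mu_{j,i} x_j)^2 \|\vec{b}_i^*\|^2 \leq \cdots$ by its shifted counterpart $(x_i + \sum_{j>i} \mu_{j,i} x_j - t_i)^2 \|\vec{b}_i^*\|^2 \leq \cdots$. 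This enumerates exactly the $\vec{v} \in L$ satisfying $\|\vec{v} - \vec{t}\|^2 \leq A$.

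Second, I take $A = d \cdot \|\vec{b}_1\|^2$ so as to guarantee that every closest vector is detected. To justify this choice, Babai's nearest hyperplane algorithm on the HKZ-reduced basis returns a lattice point at distance at most $\frac{1}{2}\sqrt{\sum_i \|\vec{b}_i^*\|^2} \leq \frac{\sqrt{d}}{2}\|\vec{b}_1\|$ from $\vec{t}$ (using the simplifying assumption). Hence the distance from $\vec{t}$ to $L$ has square at most $A$, so the modified enumeration is exhaustive.

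For the complexity, I re-run verbatim the analysis of Subsection~4.1. The only change is that the theta sum $\sum_{x \in \mZ} \exp(-t(x-c)^2)$ replaces $\Theta(t)$ for some real shift $c \in [-1/2, 1/2]$; a trivial comparison shows that it is still bounded by $\Theta(t) \cdot O(1)$. The tree-node counting argument (based on the pairing of valid and invalid truncated coordinates) is insensitive to the shift. One thus obtains, as in Equation~(\ref{eq:interval}),
$$\text{cost} \leq \mathcal{P}(\log B, n) \cdot 2^{O(d)} \cdot \max_{I \subset \iinterv{1,d}} \frac{(\sqrt{A})^{\card{I}}}{(\sqrt{d})^{\card{I}} \prod_{i \in I} \|\vec{b}_i^*\|} = \mathcal{P}(\log B, n) \cdot 2^{O(d)} \cdot \max_I \frac{\|\vec{b}_1\|^{\card{I}}}{\prod_{i \in I} \|\vec{b}_i^*\|}.$$
Since $\|\vec{b}_i^*\| \leq \|\vec{b}_1\|$ for every $i$ (HKZ plus the simplifying assumption), adjoining an index to $I$ can only increase the ratio, so the maximum is attained at $I = \iinterv{1,d}$ and equals $\|\vec{b}_1\|^d/\det(L)$. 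By Minkowski's theorem, $\|\vec{b}_1\| = \lambda(L) \leq \sqrt{(d+4)/4} \cdot (\det L)^{1/d}$, so this ratio is at most $((d+4)/4)^{d/2} = d^{d/2+o(d)}$, yielding the announced bound.

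The main obstacle is not any single ingredient but rather verifying that every piece of the SVP analysis transfers cleanly under the translation by $\vec{t}$: the shift moves the set of truncated coordinates visited by the algorithm and recenters the theta sum in Subsection~4.1, but neither affects the bounds up to absolute constants. Modulo this bookkeeping, everything else is essentially immediate; in particular no analogue of the delicate inequality of Theorem~\ref{th:main4HKZ} is needed here, since a single application of Minkowski to the full basis suffices for the worst-case $I$.
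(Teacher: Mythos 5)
Your proof is correct and follows essentially the same route as the paper: HKZ-reduce via Theorem~\ref{th:main}, run the $\vec{t}$-shifted enumeration with $A = d\|\vec{b}_1\|^2$ (the paper's ``$A = d\cdot\|\vec{b}_1\|$'' is a typo for this, since $A$ bounds the \emph{squared} norm), bound the cost via the ellipsoid count of Equation~(\ref{eq:interval}), and finish with a single application of Minkowski's theorem to $\|\vec{b}_1\|^d/\det L$. Your observation that the maximum over $I$ is trivially attained at $\iinterv{1,d}$ so that Theorem~\ref{th:main4HKZ} is not needed, and your remark on the shifted theta sum (which is in any case already absorbed by the paper's $(x+\eps)^2\geq x^2/4$ reduction), are both sound and simply make explicit what the paper states at a high level.
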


The best deterministic complexity bound previously known for this
problem was~$\mathcal{P}(\log B, n) \cdot d^{d+o(d)}$
(see~\cite{Helfrich85,Blomer00}).  Our result can also be adapted to
enumerating all vectors of a lattice that are of length below a
prescribed bound, which is in particular useful in the context of
computing lattice theta series.

Another important consequence of our analysis is a significant
worst-case bound improvement of Schnorr's block-based
strategy~\cite{Schnorr87} to compute relatively short vectors. More
precisely, if we take the bounds given in~\cite{GaHoKoNg06} for
the quality of 
Schnorr's semi-$2k$ reduction and for the transference reduction, we
obtain the table of Figure~\ref{fig:bkz}. 
Each entry of the table
gives the upper bound of the quantity~$\frac{\|\vec{b}_1\|}{(\det 
L)^{1/d}}$ which is reachable for a computational effort of~$2^t$,
for~$t$ growing to infinity. To sum up, the multiplicative 
exponent constant is divided 
by~$e \approx 2.7$.
The table upper bounds can be adapted to the
quantity~$\frac{\|\vec{b}_1\|}{\lambda_1(L)}$ by squaring them.

\begin{figure}[htbp]
$$ \begin{array}{|c|c|c|}
\hline
& \mbox{Semi-$2k$ reduction} & \mbox{Transference reduction}\\
\hline 
\mbox{Using Helfrich's complexity bound} & 
\aplt 2^{\frac{\log 2}{2} \frac{d \log^2 t}{t}} \approx 
2^{0.347 \frac{d \log^2 t}{t}} &
\aplt 2^{\frac{1}{4} \frac{d \log^2 t}{t}} \approx 
2^{0.250 \frac{d \log^2 t}{t}} \\
\hline 
\mbox{Using the improved complexity bound} & 
\aplt 2^{\frac{\log 2}{2e} \frac{d \log^2 t}{t}} \approx 
2^{0.128 \frac{d \log^2 t}{t}} &
\aplt 2^{\frac{1}{4e} \frac{d \log^2 t}{t}} \approx 
2^{0.092 \frac{d \log^2 t}{t}} \\
\hline
\end{array} \vspace*{-.2cm} $$
\caption{Worst-case bounds for block-based reduction algorithms.}
\label{fig:bkz} \vspace*{-.4cm}
\end{figure}

Let us finish by mentioning that work under progress seems to show, by
using a technique due to Ajtai~\cite{Ajtai03}, that our analyses are
sharp, in the sense that for all~$\varepsilon > 0$, we
can build HKZ-reduced bases for which the number of steps of Kannan's
algorithm would be of the order of $d^{d(\frac{1}{2e}-\varepsilon)}$.

\bibliographystyle{plain}
\bibliography{these}

\newcommand{\SortNoop}[1]{}
\begin{thebibliography}{10}

\bibitem{AgErVaZe02}
E.~Agrell, T.~Eriksson, A.~Vardy, and K.~Zeger.
\newblock Closest point search in lattices.
\newblock {\em IEEE Transactions on Information Theory}, 48(8):2201--2214,
  2002.

\bibitem{Ajtai98}
M.~Ajtai.
\newblock The shortest vector problem in $l_2$ is {NP}-hard for randomized
  reductions (extended abstract).
\newblock In {\em Proceedings of the 30th Symposium on the Theory of Computing
  ({STOC 1998})}, pages 284--293. ACM Press, 1998.

\bibitem{Ajtai03}
M.~Ajtai.
\newblock The worst-case behavior of {S}chnorr's algorithm approximating the
  shortest nonzero vector in a lattice.
\newblock In {\em Proceedings of the 35th Symposium on the Theory of Computing
  ({STOC 2003})}, pages 396--406. ACM Press, 2003.

\bibitem{AjDw97}
M.~Ajtai and C.~Dwork.
\newblock A public-key cryptosystem with worst-case/average-case equivalence.
\newblock In {\em Proceedings of the 29th Symposium on the Theory of Computing
  ({STOC 1997})}, pages 284--293. ACM Press, 1997.

\bibitem{AjKuSi01}
M.~Ajtai, R.~Kumar, and D.~Sivakumar.
\newblock A sieve algorithm for the shortest lattice vector problem.
\newblock In {\em Proceedings of the 33rd Symposium on the Theory of Computing
  ({STOC 2001})}, pages 601--610. ACM Press, 2001.

\bibitem{Babai86}
L.~Babai.
\newblock On {L}ov{\'a}sz lattice reduction and the nearest lattice point
  problem.
\newblock {\em Combinatorica}, 6:1--13, 1986.

\bibitem{Blomer00}
J.~B{\"o}mer.
\newblock Closest vectors, successive minima and dual-{HKZ} bases of lattices.
\newblock In {\em Proceedings of the 2000 International Colloquium on Automata,
  Languages and Programming ({ICALP 2000})}, volume 1853 of {\em Lecture Notes
  in Computer Science}, pages 248--259. Springer-Verlag, 2000.

\bibitem{CoSh97}
D.~Coppersmith and A.~Shamir.
\newblock Lattice attacks on {NTRU}.
\newblock In {\em Proceedings of {E}urocrypt 1997}, volume 1233 of {\em Lecture
  Notes in Computer Science}, pages 52--61. Springer-Verlag, 1997.

\bibitem{FiPo83}
U.~Fincke and M.~Pohst.
\newblock A procedure for determining algebraic integers of given norm.
\newblock In {\em Proceedings of EUROCAL}, volume 162 of {\em Lecture Notes in
  Computer Science}, pages 194--202, 1983.

\bibitem{GaHoKoNg06}
N.~Gama, N.~Howgrave-Graham, H.~Koy, and P.~Nguyen.
\newblock Rankin's constant and blockwise lattice reduction.
\newblock In {\em Proceedings of {C}rypto 2006}, number 4117 in Lecture Notes
  in Computer Science, pages 112--130. Springer-Verlag, 2006.

\bibitem{GoGoHa97}
O.~Goldreich, S.~Goldwasser, and S.~Halevi.
\newblock Public-key cryptosystems from lattice reduction problems.
\newblock In {\em Proceedings of Crypto 1997}, volume 1294 of {\em Lecture
  Notes in Computer Science}, pages 112--131. Springer-Verlag, 1997.

\bibitem{HaRe06}
I.~Haviv and O.~Regev.
\newblock Tensor-based hardness of the shortest vector problem to within almost
  polynomial factors.
\newblock Submitted.

\bibitem{Helfrich85}
B.~Helfrich.
\newblock Algorithms to construct {M}inkowski reduced and {H}ermite reduced
  lattice bases.
\newblock {\em Theoretical Computer Science}, 41:125--139, 1985.

\bibitem{Hermite50}
C.~Hermite.
\newblock Extraits de lettres de {M.\ Hermite} {\`a} {M.\ Jacobi} sur
  diff{\'e}rents objets de la th{\'e}orie des nombres, deuxi{\`e}me lettre.
\newblock {\em Journal f{\"u}r die reine und angewandte {M}athematik},
  40:279--290, 1850.

\bibitem{HoPiSi98}
J.~Hoffstein, J.~Pipher, and J.~H. Silverman.
\newblock {NTRU} \hspace*{-1mm}: a ring based public key cryptosystem.
\newblock In {\em Proceedings of the 3rd {A}lgorithmic {N}umber {T}heory
  {S}ymposium ({ANTS III})}, volume 1423 of {\em Lecture Notes in Computer
  Science}, pages 267--288. Springer-Verlag, 1998.

\bibitem{Kannan83}
R.~Kannan.
\newblock Improved algorithms for integer programming and related lattice
  problems.
\newblock In {\em Proceedings of the 15th Symposium on the Theory of Computing
  ({STOC 1983})}, pages 99--108. ACM Press, 1983.

\bibitem{LeLeLo82}
A.~K. Lenstra, H.~W. Lenstra, Jr., and L.~Lov{\'a}sz.
\newblock Factoring polynomials with rational coefficients.
\newblock {\em Mathematische Annalen}, 261:513--534, 1982.

\bibitem{Martinet02}
J.~Martinet.
\newblock {\em Perfect Lattices in Euclidean Spaces}.
\newblock Springer-Verlag, 2002.

\bibitem{MaOd90}
J.~Mazo and A.~Odlyzko.
\newblock Lattice points in high-dimensional spheres, 1990.

\bibitem{MiGo02}
D.~Micciancio and S.~Goldwasser.
\newblock {\em Complexity of lattice problems \hspace*{-1mm}: a cryptographic
  perspective}.
\newblock Kluwer Academic Press, 2002.

\bibitem{Minkowski96}
H.~Minkowski.
\newblock {\em {G}eometrie der {Z}ahlen}.
\newblock Teubner-Verlag, 1896.

\bibitem{Regev04}
O.~Regev.
\newblock Lecture notes of {\em lattices in computer science}, taught at the
  {C}omputer {S}cience {T}el {A}viv {U}niversity.
\newblock Available at \url{http://www.cs.tau.il/~odedr}.

\bibitem{Schnorr87}
C.~P. Schnorr.
\newblock A hierarchy of polynomial lattice basis reduction algorithms.
\newblock {\em Theoretical Computer Science}, 53:201--224, 1987.

\bibitem{ScEu94}
C.~P. Schnorr and M.~Euchner.
\newblock Lattice basis reduction \hspace*{-1mm}: improved practical algorithms
  and solving subset sum problems.
\newblock {\em Mathematics of Programming}, 66:181--199, 1994.

\bibitem{NTL}
V.~Shoup.
\newblock {NTL}, {N}umber {T}heory {C++} {L}ibrary.
\newblock Available at \url{http://www.shoup.net/ntl/}.

\end{thebibliography}

\section*{Proof of Lemma~\ref{le:hermite}}
We prove the result by induction on~$k$.For~$k=1$, the bound easily follows 
from~$\gamma_{d} \leq (d+4)/4$.
Suppose now that the result holds for some~$k \in \iinterv{1,d-2}$, 
and that we want to show that it holds for~$k+1$. Notice that we can suppose 
that~$d \geq 3$. 
Define~$G_d(k) = \frac{1}{2} \log d \log \frac{d}{d-k}$. Then for 
any~$\lambda > 0$,
$$G_d(k+\lambda) - G_d(k) = - \frac12 \log d \log \frac{d-k-\lambda}{d-k}
\geq \frac12 \frac{\lambda \log d}{d - k}.$$

Taking $\lambda = 1$, we see 
that~$G_d(k+1) - G_d(k) \geq \frac12 \frac{\log d}{d - k}$.

{From} the upper bound~$\gamma_d \leq (d+4)/4$, we obtain:
$$\log \Gamma_d(k+1) - \log \Gamma_d(k) =
\frac{1}{2} \frac{\log \gamma_{d-k}}{d-k-1} \leq \frac{1}{2}
\frac{\log (d - k + 4)/4}{d-k-1}.$$

Now, since the 
sequence~$\left(\frac{n \log ((n + 4)/4)}{n - 1}\right)_{n\geq 2}$ is 
increasing, we have:
\begin{eqnarray*}
\frac{(d - k) \log((d - k + 4) / 4)}{d-k-1} 
& \leq & \frac{d - 1}{d - 2} \log ((d + 3) / 4)\\
& = & \log d + \frac{(d-1)\log ((d+3)/4) - (d - 2)\log d}{d - 2} \\
& \leq & \log d,
\end{eqnarray*}
since the last term is a decreasing function of $d$, which is negative 
for~$d=3$.
\qed

\section*{Proof of Lemma~\ref{le:convexity}}
We have~$-\log \prod_{i \leq t} \delta^{-k_i\log \frac{k_i}{d_i}} =
(\log \delta)\cdot \sum_{i\leq t} k_i \log \frac{k_i}{d_i}$.
Now, note that the function~$x \mapsto x \log x$ is convex
on~$[0,+\infty)$.  This means that for any~$t \geq 1$, for
any~$a_1,\ldots,a_t>0$, and for any~$\lu,\ldots,\lambda_t \in [0,1]$ 
such that~$\sum_{i\leq t} \lambda_i=1$, we have:
$$\sum_{i\leq t} \lambda_i a_i \log a_i\geq 
\left(\sum_{i\leq t} \lambda_i a_i \right) \log 
\left(\sum_{i\leq t} \lambda_i a_i \right).$$

In particular, for~$\lambda_i \assign \frac{ d_i}{\sum_{i\leq t}  d_i}$ 
and~$a_i \assign \frac{k_i}{d_i}$,
we get (after multiplication by~$\sum_{i\leq t}  d_i$):
$$-\log \prod_{i \leq t} \delta^{-k_i\log \frac{k_i}{d_i}}
\geq
(\log \delta) \cdot \left( \sum_{i\leq t}  k_i \right) \log 
\left( \frac{\sum_{i\leq t}  k_i}{\sum_{i \leq t}  d_i} 
\right),$$
which is 
exactly~$- \log \delta^{-\left( \sum_{i\leq t} k_i\right) \log 
\frac{\sum_{i\leq t} k_i}{\sum_{i\leq t} d_i}}$.
\qed

\section*{Proof of Lemma~\ref{le:mschnorr}.}
\begin{proof}
We start with the first identity. We prove it by induction
on~$k$.  For~$k=1$, this is Minkowski's bound.  Assume it to be true
for a given~$k\leq d-2$.  We are to prove that it holds for~$k+1$
instead of~$k$. By applying Minkowski's bound to the
$(d-k)$-dimensional HKZ-reduced basis~$\vec{b}_{k+1}^*, \ldots,
\vec{b}_d^*$, we have:
\begin{equation}
\label{eq:minkowski}
\|\vec{b}_{k+1}^*\| \leq \sqrt{\gamma_{d-k}}^{\frac{d-k}{d-k-1}} \cdot
\pi_{\iinterv{k+2, d}}.
\end{equation}

We can rewrite our induction hypothesis as
$$ \pi_{\iinterv{1, k+1}}^{\frac{k+1}{k}} \cdot
\|\vec{b}_{k+1}^*\|^{-\frac{1}{k}} \leq \left( \Gamma_d(k)
\right)^{\frac{d}{k}} \cdot \pi_{\iinterv{k+2, d}}^{\frac{d-k-1}{d-k}}
\cdot \|\vec{b}_{k+1}^*\|^{\frac{1}{d-k}},
$$
or, again, as
$$ \pi_{\iinterv{1, k+1}}^{\frac{k+1}{k}} \leq
\left( \Gamma_d(k) \right)^{\frac{d}{k}} \cdot
\pi_{\iinterv{k+2, d}}^{\frac{d-k-1}{d-k}} \cdot
\|\vec{b}_{k+1}^*\|^{\frac{d}{k(d-k)}}.
$$
This gives, by using Equation~(\ref{eq:minkowski}):
$$
\pi_{\iinterv{1, k+1}}^{\frac{k+1}{k}}
\leq
\left(\Gamma_d(k)\right)^{\frac{d}{k}} \cdot
\sqrt{\gamma_{d-k}}^{\frac{d}{k(d-k-1)}} \cdot
\pi_{\iinterv{k+2, d}}^{\frac{k+1}{k}}
=
\left(\Gamma_d(k+1)\right)^{\frac{d}{k}} \cdot 
\pi_{\iinterv{k+2, d}}^{(k+1)/k}.
$$
By raising this last identity to the power~$\frac{k}{k+1}$, we get
$$
\pi_{\iinterv{1,k+1}} \leq
\left(\Gamma_d({k+1})\right)^{\frac{d}{k+1}} \cdot \pi_{\iinterv{k+2, d}},
$$
which, by induction, yields the first inequality. 
\medskip

The second inequality follows easily from the first one. Indeed, it
suffices to raise the first one to the power~$k/d$, multiply both
sides by~$\left( \pi_{\iinterv{k+1,d}} \right)^{(d-k)/d}$, and use the
identity~$\det L = \left( \pi_{\iinterv{1,k}} \right)^k \cdot \left(
\pi_{\iinterv{k+1,d}} \right)^{d-k}$.
\end{proof}

\section*{Proof of Lemma~\ref{le:reel}.}

First notice that, as 
a consequence of Lemma~\ref{le:mschnorr}, we have, for $k, l$ integers, 
$1\leq k\leq l < d$, 
\begin{equation}
\label{eq:mschnorr}
\pi_{\iinterv{l+1, d}} \geq 
\Gamma_{d-k}(l-k)^{-1} \cdot \pi_{\iinterv{k+1, d}}.
\end{equation}

\noindent
Recall that:
$$
\pis_{[x_1,d]}
= \left(\pi_{\iinterv{\fxu,d}}\right)^{\lu}
\cdot 
\left(\pi_{\iinterv{\fxu+1,d}}\right)^{1-\lu}
\ \ \mbox{and} \ \ 
\pis_{[x_2,d]}
= \left(\pi_{\iinterv{\fxd,d}}\right)^{\ld}
\cdot 
\left(\pi_{\iinterv{\fxd+1,d}}\right)^{1-\ld},
$$
with~$\lambda_i = \frac{(d-\lfloor x_i \rfloor +1)(1-x_i+\lfloor x_i \rfloor)}
{d-x_i+1}$ for~$i\in \{1,2\}$. Notice that since~$x_1<x_2$, 
either~$\fxu +1 \leq \fxd$, or
$\fxu = \fxd$. In the last case, since the 
function~$x \mapsto (u - x)/(v - x)$ is
decreasing when~$u < v$ and for~$x<u$, 
we must have~$\lambda_2 < \lambda_1$. 

We split the proof in several cases, 
depending on the respective values of~$\lu$ and~$\ld$. 
\bigskip

\noindent {\bf First case:} $\lu \leq \ld$.
In that case, we have $\fxu +1 \leq \fxd$.
We define
$$
G :=  \Gamma_{d-\fxu+1}(\fxd - \fxu)^{\lu}
\cdot \Gamma_{d-\fxu}(\fxd - \fxu - 1)^{\ld - \lu}
\cdot \Gamma_{d-\fxu}(\fxd - \fxu)^{1-\ld}.
$$

By using three times Equation~(\ref{eq:mschnorr}), we get:
\begin{eqnarray*}
\pis_{[x_2,d]} & = & \left(\pi_{\iinterv{\fxd, d}}\right)^{\ld}
\cdot 
\left(\pi_{\iinterv{\fxd + 1,d}}\right)^{1-\ld}\\
& \geq & \left(\pi_{\iinterv{\fxd, d}}\right)^{\lu} \cdot 
\left(\pi_{\iinterv{\fxd, d}}\right)^{\ld-\lu} \cdot
\left(\pi_{\iinterv{\fxd + 1,d}}\right)^{1-\ld}\\
& \geq & G^{-1} \cdot
\left(\pi_{\iinterv{\fxu, d}}\right)^{\lu}
\cdot \left(\pi_{\iinterv{\fxu+1, d}}\right)^{1-\lu}.
\end{eqnarray*}

Now, Lemma~\ref{le:mschnorr} gives that 
$$\frac{\log G}{\log \sqrt{d}} \leq \lu \log \frac{d-\fxu + 1}{d-\fxd + 1}
+ (\ld - \lu) 
\log \frac{d-\fxu}{d-\fxd + 1}
+ (1 - \ld) \log \frac{d - \lfloor x_1\rfloor}{d- \fxd},
$$
which, by concavity of the function~$x\mapsto \log x$, is at most the 
logarithm of 
$$
E(x_1, x_2) := \lu \frac{d-\fxu + 1}{d-\fxd + 1} + 
(\ld - \lu) \frac{d-\fxu}{d-\fxd + 1} + 
(1 - \ld) \frac{d - \lfloor x_1\rfloor}{d- \fxd}$$
To complete the proof of this first case, it suffices to prove that 
$E(x_1, x_2) \leq frac{d - x_1}{d - x_2}$.\
We have
\begin{eqnarray*}\label{Eaprescalcul}
E(x_1, x_2) & = & 
\frac{\lambda_1}{d - \fxd + 1} + \frac{d - \fxu}{d - x_2 + 1}\\
& = & \frac{d - x_1}{d - x_2} + 
\frac{\lambda_1}{d - \fxd + 1} - 
\frac{1-\mbox{frac}(x_1)}{d-x_2 + 1} 
- \frac{x_2 - x_1}{(d - x_2)(d - x_2 + 1)},\\
&\leq & \frac{d - x_1}{d - x_2} + \frac{1}{d - x_2 + 1}
\left( \lu - (1 - \mbox{frac}(x_1)) - 
\frac{x_2 - x_1}{d - x_2}\right)\\
& = & \frac{d - x_1}{d - x_2} + \frac{1}{d - x_2 + 1}
\left( \frac{(1-\mbox{frac}(x_1))\mbox{frac}(x_1)}{d-x_1+1} 
- \frac{x_2 - x_1}{d - x_2} \right)\\
&\leq &  \frac{d - x_1}{d - x_2} + \frac{1}{d - x_2 + 1}
\left( \frac{1-\mbox{frac}(x_1)}{d - x_2} - \frac{x_2 - x_1}{d - x_2}\right),
\end{eqnarray*}
from which the result follows at once, since~$\fxu < \fxd$ implies
that~$x_2 - x_1 = \fxd - \fxu + \mbox{frac}(x_2) - \mbox{frac}(x_1) 
\geq 1 - \mbox{frac}(x_1)$. 
\bigskip

\noindent {\bf Second case:}~$\lu > \ld$.
Similarly, defining
$$H = \Gamma_{d-\lfloor x_1\rfloor +1}(\fxd - \lfloor x_1\rfloor)^{\ld} \cdot  
\Gamma_{d-\lfloor x_1\rfloor + 1}(\fxd - \lfloor x_1\rfloor + 1)^{\lu - \ld} 
\cdot
\Gamma_{d-\lfloor x_1\rfloor}(\fxd - \lfloor x_1\rfloor)^{1 - \lu},$$
we obtain 
\begin{eqnarray*}
\pis_{[x_2,d]} & = & \left(\pi_{\iinterv{\fxd, d}}\right)^{\ld}
\cdot 
\left(\pi_{\iinterv{{\fxd}+1,d}}\right)^{1-\ld}\\
& = & \left(\pi_{\iinterv{\fxd, d}}\right)^{\ld}
\left(\pi_{\iinterv{\fxd+1, d}}\right)^{\lu - \ld}
\left(\pi_{\iinterv{\fxd+1, d}}\right)^{1 - \lu}
\\
& \geq &  H^{-1} 
\left(pi_{\iinterv{\fxu, d}}\right)^{\lu} 
\left(\pi_{\iinterv{\fxu+1, d}}\right)^{1 - \lu}.
\end{eqnarray*}

Lemma~\ref{le:mschnorr} gives us that: 
$$\frac{\log H}{\log \sqrt{d}} \leq \ld
\log \frac{d-\fxu + 1}{d-\fxd + 1}
+ (\lu - \ld) 
\log \frac{d-\fxu + 1}{d-\fxd}
+ (1 - \lu) \log \frac{d - \lfloor x_1\rfloor}{d- \fxd}.
$$

By concavity of the function~$x\mapsto \log x$, the right hand side is 
at most the logarithm of 
\begin{eqnarray*}
\ld \frac{d-\fxu + 1}{d-\fxd + 1} + 
(\lu - \ld) \frac{d-\fxu+1}{d-\fxd} & + &
(1 - \lu) \frac{d - \lfloor x_1\rfloor}{d- \fxd} \\ 
& = & E(x_1, x_2) + \frac{\lu - \ld}{(d-\fxd)(d-\fxd + 1)}.
\end{eqnarray*}

Hence, we just need to prove that: 
$$E'(x_1, x_2) := E(x_1, x_2) + \frac{(\lu - \ld)}{(d-\fxd)(d-\fxd + 1)} \leq 
\frac{d - x_1}{d - x_2}.$$

Some elementary calculus provides the equalities:
\begin{eqnarray*}
E'(x_1, x_2) & = & \frac{d-x_1}{d-x_2} + \frac{\lu}{d-\fxd}
-\frac{1-\mbox{frac}(x_2)}{(d - \fxd)(d-x_2+1)} 
-\frac{1-\mbox{frac}(x_1)}{d-x_2+1}
- \frac{x_2-x_1}{(d-x_2)(d-x_2+1)}\\
& = &
\frac{d-x_1}{d-x_2} + \frac{\lu}{d-\fxd} - 
\frac{1 - \mbox{frac}(x_1)}{d - x_2} - 
\frac{1-\mbox{frac}(x_2)}{(d - \fxd)(d-x_2+1)}
- \frac{x_2-\fxu-1}{(d-x_2)(d-x_2+1)}
\end{eqnarray*}

\noindent {\bf Second case, first sub-case:}~$\lu>\ld$, $\fxu < \fxd$.
In that case, 
\begin{eqnarray*} 
E'(x_1, x_2) - \frac{d-x_1}{d-x_2}
& \leq & \frac{\lambda_1 - (1 - \mbox{frac}(x_1))}{d - x_2} 
- \frac{1 - \mbox{frac}(x_2)}{(d - \fxd)(d-x_2+1)} 
- \frac{1}{(d-x_2)(d-x_2+1)}\\
&\leq & \frac{1 - \mbox{frac}(x_1)}{(d-x_2)(d-x_1+1)}
- \frac{1}{(d-x_2)(d-x_2+1)}\\
&\leq & 0
\end{eqnarray*}
\bigskip

\noindent {\bf Second case, second sub-case:}~$\lu>\ld$, $\fxu = \fxd$.
In that case, after some rewriting which can be checked with one's favourite
computer algebra system, one finds that:
\begin{eqnarray*}
E'(x_1, x_2) - \frac{d-x_1}{d-x_2}
& = & \frac{1}{(d-\fxu)(d-x_2)}
\left( \frac{(1-\mbox{frac}(x_1))(x_1-x_2)(d-\fxd)}{d-x_1+1}
- \frac{\mbox{frac}(x_2)(\lambda_1-\lambda_2) }{d-\lfloor x_1 \rfloor +1}  
\right)\\
& \leq & 0.
\end{eqnarray*}
\qed 
\end{document}